 \providecommand\BibTeX{{%
 \normalfont B\kern-0.5em{\scshape i\kern-0.25em b}\kern-0.8em\TeX}}}
\begin{document}

\title{Communication Bounds for Convolutional Neural Networks}

\author{Anthony Chen}
\authornote{Authors listed alphabetically by last name}
\orcid{0000-0003-0760-8329}
\affiliation{%
 \institution{University of Michigan, Ann Arbor}
 \city{Ann Arbor}
 \country{USA}
}

\author{James Demmel}
\affiliation{%
 \institution{University of California, Berkeley}
 \city{Berkeley}
 \country{USA}}

\author{Grace Dinh}
\affiliation{%
 \institution{University of California, Berkeley}
 \city{Berkeley}
 \country{USA}
}

\author{Mason Haberle}
\affiliation{%
 \institution{New York University}
 \city{New York}
 \country{USA}}

\author{Olga Holtz}
\affiliation{%
 \institution{University of California, Berkeley}
 \city{Berkeley}
 \country{USA}}

\renewcommand{\shortauthors}{Chen et al.}

\begin{abstract}
Convolutional neural networks (CNNs) are important in a wide variety of machine learning tasks and applications, so optimizing their performance is essential. Moving words of data between levels of a memory hierarchy or between processors on a network is much more expensive than the cost of arithmetic, so minimizing communication is critical to optimizing performance. In this paper, we present new lower bounds on data movement for mixed precision convolutions in both single-processor and parallel distributed memory models, as well as algorithms that outperform current implementations such as Im2Col. We obtain performance figures using GEMMINI, a machine learning accelerator, where our tiling provides improvements between 13\% and 150\% over a vendor supplied algorithm. 

\end{abstract}

\begin{CCSXML}
<ccs2012>
<concept>
<concept_id>10010147.10010257.10010321</concept_id>
<concept_desc>Computing methodologies~Machine learning algorithms</concept_desc>
<concept_significance>500</concept_significance>
</concept>
<concept>
<concept_id>10010147.10010169.10010170</concept_id>
<concept_desc>Computing methodologies~Parallel algorithms</concept_desc>
<concept_significance>500</concept_significance>
</concept>
<concept>
<concept_id>10002950.10003714</concept_id>
<concept_desc>Mathematics of computing~Mathematical analysis</concept_desc>
<concept_significance>300</concept_significance>
</concept>
</ccs2012>
\end{CCSXML}

\ccsdesc[500]{Computing methodologies~Machine learning algorithms}
\ccsdesc[500]{Computing methodologies~Parallel algorithms}
\ccsdesc[300]{Mathematics of computing~Mathematical analysis}


\keywords{Convolutional neural networks, communication avoiding algorithms}

\maketitle

\section{Introduction} \label{intro}
Convolutional neural networks (CNNs) are important in many machine learning applications and their computational intensity makes their computation a major bottleneck, requiring efficient implementations on modern architectures. To do so, it is important to recognize that most of the time and energy spent during the execution of a CNN often goes towards communication, the movement of data between different levels of the memory hierarchy (RAM to L3 cache) or between processors operating in parallel. The cost of moving one word of data is frequently orders of magnitude larger than the cost of performing one arithmetic operation both in terms of time and power consumption. This disparity is only increasing as time passes \cite{NAP12980}. Minimizing communication has driven optimization efforts in numerical linear algebra, giving rise to the highly tuned implementations seen in BLAS and LAPACK which attain a high fraction of a machine's maximum possible performance. 


In this paper, we consider the problem of computing a single convolution layer of a CNN, which can be written as seven nested loops. Our model is described in Section~\ref{cnn}. In our theoretical work, we consider various ways of organizing this computation, and we ask which order minimizes the amount of communication between main memory and cache in the single processor case, or between the network of processors in the parallel case. We describe our CNN and memory model in detail in~\ref{cnn}. 

Our first contribution is to provide new communication lower bounds in both single processor and parallel architectures including precise constants and allowing for mixed precision data. These bounds are presented in Section \ref{results} with proofs in Sections \ref{serialbounds} and \ref{parallelbounds}. Our second contribution is to provide algorithms which meet these bounds in the parallel case for large parameters, and which approach them more closely than previously attainable in the single processor case. These results are described in Sections \ref{singleattainability} and \ref{parallelattainability}.


The rest of the paper is organized as follows. Section~\ref{preliminaries} describes our CNN model before we briefly discuss our results, and introduces the main theoretical tool for computing lower bounds, the H\"{o}lder-Brascamp-Lieb inequalities. Sections~\ref{single} and \ref{parallel} present lower bounds for the single processor and parallel cases respectively with discussions of attainability. In Section~\ref{performance}, we discuss performance results obtained using GEMMINI, a machine learning accelerator. 

\section{Preliminaries} \label{preliminaries}

\subsection{CNN and Memory Model} \label{cnn}

We consider the following loop nest for directly computing a convolution layer of a CNN. Since it has 7 nested loops surrounding a simple update instruction, we call it 7NL CNN:
\begin{align}
 & {\text{for}}\{i_1,i_2,i_3,i_4,i_5,i_6,i_7\}=0:\{N,c_I,c_O,w_O,h_O,w_F,h_F\}-1\nonumber \\
 & \ \ \text{Output}(i_1,i_3,i_4,i_5)\,+=\text{Input}(i_1,i_2,\sigma_{w}i_4 + i_6, \sigma_{h}i_5 + i_7)\times\nonumber \\& \ \ \quad\quad\quad\quad\quad\quad\quad\quad\quad\quad \text{Filter}(i_2,i_3,i_6,i_7)\label{eqn_CNN}
\end{align}
where the Input has dimensions $N\times c_I\times(w_F + \sigma_{w}w_O)\times(h_F + \sigma_{h}h_O)$, the Output has dimensions $N\times c_O\times w_O\times h_O$, and the Filter has dimensions $c_I\times c_O\times w_F\times h_F$. $N$ is the number of images, $c_I$ is the number of channels of the input image, $c_O$ is the number of channels of the output image, $w_O$ and $h_O$ are the width and height of the output image, $w_F$ and $h_F$ are the width and height of one convolution filter, and $\sigma_{w}$ and $\sigma_{h}$ are the stride sizes in the horizontal and vertical dimensions respectively. We assume that the filter sizes are smaller than the input image sizes, and in practice, they are usually much smaller. This gives us the assumptions $w_F\leq\sigma_ww_O$ and $h_F\leq\sigma_hh_F$. We also assume that $\sigma_w\leq w_F$ and $\sigma_h\leq h_F$ so that all elements of the image are used. Then, the input has size $|I| = Nc_I(\sigma_ww_O+w_F)(\sigma_hh_O+h_F)$, the output has size $|O| = Nc_Ow_Oh_O$, and the filter has size $|F| = c_Ic_Ow_Fh_F$. The precisions of the input, output, and filter are $p_I$, $p_O$, and $p_F$ respectively.  These are in units of words (32 bits).  We define the sum of the precisions as $p_T=p_I+p_O+p_F$. 

Note that each iteration of the loop nest requires access to a single element of both the Input and Filter arrays, and must make a single update to the Output array. The order in which these updates are made does not impact the result, so they may be reorganized as desired to optimize for data movement. We consider a \emph{computation} of the 7NL CNN algorithm to be an execution of all $G := Nc_Ic_Ow_Oh_Ow_Fh_F$ updates, performed in any order.

We consider computations of 7NL CNN within two different architectures. The first is a single processor architecture with a 2-layer memory model: a cache which may hold $M$ words of data and may be accessed instantaneously by the processor, and a main memory of arbitrary size. The movement of one word of data from the main memory to the cache, or back, is counted as a single unit of communication. Data begins in the main memory, and the output of the computation must reside in the main memory before the computation is complete.

The second architecture is a parallel processor architecture with a distributed memory model. There are $P$ processors, each with their own local instantaneously accessible memory of size $M$ words. Any one processor may communicate with any other processor, and each word sent is counted as a single unit of communication. Data may begin in any processor, and may end in any processor by the completion of the computation.
%
%
%
\subsection{Our Results} \label{results}

Before we discuss the mathematical tools involved in our analysis, we first briefly summarize our results. We find results corresponding both to the single and parallel architectures with mixed precision data. Proofs are found in Sections \ref{serialbounds} and \ref{parallelbounds}.

We find the following lower bound for the number of words communicated in a single-processor memory model with fast memory (cache) and slow memory, where the Input array takes entries which are $p_I$ words in length, and similarly $p_F$ for the Filter and $p_O$ for the Output.
\begin{theorem} \label{thm:singleBoundPrecision}
If $X$ is the number of words communicated by \emph{7NL CNN} within a single-processor memory model with $M$ words of fast memory, where \emph{Input}, \emph{Filter}, and \emph{Output} are non-overlapping arrays and $G := Nc_Ic_Ow_Oh_Ow_Fh_F$ is the total number of updates performed during the computation, we have

\begin{multline*}
 X \geq \max\big\{p_I|I| + p_F|F| + p_O|O|, C_pGM^{-1} - M, \\ {2(p_Ip_Fp_O)^{1/2}(\sigma_w\sigma_h)^{1/2}G }{(w_Fh_FM)^{-1/2}} - 2M\big\}
\end{multline*}

where the value of $C_p = C_p(p_I, p_F, p_O)$ depends on the precisions satisfying a triangle condition:
\[
 C_p = \begin{cases} \frac{1}{4}p_T^2 & p_j \leq p_k + p_\ell \quad \textrm{for all distinct } j,k,\ell\\
 p_j(p_k + p_\ell) & p_j > p_k + p_\ell \quad \textrm{for some distinct } j, k, \ell 
 \end{cases}
\]
\end{theorem}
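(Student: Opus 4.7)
The first term is the trivial input/output bound: since Input, Filter, and Output are disjoint arrays, every word of each must cross the cache boundary at least once, so $X \geq p_I|I| + p_F|F| + p_O|O|$. For the other two terms I would use the standard round-partitioning strategy, splitting the trace into consecutive rounds in which at most $M$ words cross the cache boundary, so that the processor touches at most $2M$ distinct words per round (the $M$ already resident plus the $\leq M$ loaded). Letting $F$ denote the maximum number of 7NL CNN updates executable in one round subject to the cache occupancy $S_I + S_F + S_O \leq 2M$, a pigeonhole on the number of rounds gives $X + O(M) \geq MG/F$, so the task reduces to bounding $F$.

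To bound $F$ I would appeal to the H\"older--Brascamp--Lieb (HBL) inequality applied to the seven-index 7NL CNN polytope. Reading coverage conditions off the three projections (Input on $i_1, i_2, \sigma_w i_4 + i_6, \sigma_h i_5 + i_7$; Filter on $i_2, i_3, i_6, i_7$; Output on $i_1, i_3, i_4, i_5$), a nonnegative triple $(s_I, s_F, s_O)$ is HBL-admissible iff $s_I + s_O \geq 1$, $s_I + s_F \geq 1$, $s_F + s_O \geq 1$, and HBL then produces
\[
F \leq K(s_I, s_F, s_O)\,(S_I/p_I)^{s_I}(S_F/p_F)^{s_F}(S_O/p_O)^{s_O}.
\]
Evaluating the worst-case ratio on the defining box of 7NL CNN, one finds $K = \sqrt{w_F h_F / (\sigma_w \sigma_h)}$ for the critical symmetric exponents $(1/2, 1/2, 1/2)$, while $K$ can be taken to equal $1$ whenever the exponents sum to $2$ (the shift overlap is absorbed by the extra exponent mass). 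For bound~(3), substituting the symmetric triple and maximizing the right-hand side over the cache budget via weighted AM-GM forces $S_X/p_X$ to be equal across $X$, delivering the claimed $M^{-1/2}$ expression with its $(w_F h_F / (\sigma_w \sigma_h))^{-1/2}$ prefactor.

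For bound~(2) I would use exponents summing to $2$, which removes the CNN shift factor and yields $M^{-1}$ scaling. Lagrange maximization gives $F \leq M^2 \prod_X (s_X/p_X)^{s_X}$, so the task reduces to minimizing $\prod_X s_X^{s_X} / \prod_X p_X^{s_X}$ over $(s_I, s_F, s_O) \in [0,1]^3$ with $\sum_X s_X = 2$ and the coverage constraints still in force. The KKT stationarity condition yields the unconstrained optimum $s_X \propto p_X$, i.e., $s_X = 2 p_X / p_T$, which is feasible (both in the box $[0,1]^3$ and in the coverage polytope) exactly when $2 p_X \leq p_T$ for all $X$, equivalently the triangle condition $p_j \leq p_k + p_\ell$; substitution gives $F \leq 4 M^2 / p_T^2$ and hence $C_p = p_T^2 / 4$. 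When the triangle condition fails at some index $j$, the optimum moves to the boundary face $s_j = 1$; a one-dimensional reoptimization under $s_k + s_\ell = 1$ gives $s_k = p_k / (p_k + p_\ell)$ and $s_\ell = p_\ell / (p_k + p_\ell)$, producing $F \leq M^2 / (p_j (p_k + p_\ell))$ and hence $C_p = p_j (p_k + p_\ell)$. The main obstacle I anticipate is pinning down the HBL constant $K$ for the shifted Input projection, particularly verifying the drop to $K = 1$ for the summing-to-two exponent triples; once $K$ is under control the remaining optimization is routine KKT case analysis.
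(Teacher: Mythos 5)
Your overall route is the same as the paper's: the trivial per-array bound, a segment/round decomposition, HBL exponents optimized by Lagrange/KKT over the precisions (with the same case split on the triangle condition and the same optimizers $s_X = 2p_X/p_T$ resp.\ $s_j=1$, $s_k=p_k/(p_k+p_\ell)$), and a ``small-filter'' correction for the $M^{-1/2}$ bound. The KKT analysis for bound (2) matches the paper's essentially verbatim and is correct. However, there are two concrete problems. First, your admissibility criterion for HBL exponents is wrong as stated: the kernel-lattice analysis produces not only the three pairwise constraints but also $s_I+s_F+s_O\geq 2$, coming from the rank-$2$ subgroups $\{(0,0,0,i_4,0,i_6,0)\}$ and $\{(0,0,0,0,i_5,0,i_7)\}$, each of which projects to rank $1$ under all three access maps. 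This is precisely why $(1/2,1/2,1/2)$ does \emph{not} satisfy HBL with constant $1$ and why a correction factor is unavoidable; under your stated ``iff'' the symmetric triple would be admissible with $K=1$ and would yield a false, stronger bound. Second, the inequality $|V|\leq K\prod_j|\phi_j(V)|^{1/2}$ with $K=\sqrt{w_Fh_F/(\sigma_w\sigma_h)}$ is the entire content of the third bound, and you assert it rather than prove it (you flag it yourself as the main obstacle). ``Evaluating the worst-case ratio on the defining box'' does not suffice, since the inequality must hold for arbitrary finite subsets $V$ of the index set, not just for the full box.

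The paper closes this gap by re-indexing $i_6=\sigma_w q_6+r_6$ and $i_7=\sigma_h q_7+r_7$, slicing $V$ into the fibers $V(\vec q)$ over $\vec q=(q_6,q_7)$, and observing that the sliced access maps $\phi_I',\phi_F',\phi_O'$ form a tensor contraction (every index appears in exactly two maps), for which $(1/2,1/2,1/2)$ is genuinely admissible with constant $1$. Summing over the $w_Fh_F/(\sigma_w\sigma_h)$ fibers, using that the Filter accesses are disjoint across fibers while the Input and Output accesses are each dominated by the segment totals, and then maximizing $u^{1/2}v^{1/2}\sum_{\vec q}w(\vec q)^{1/2}$ over the cache budget (equivalently, Cauchy--Schwarz over the fibers) recovers exactly your $K$. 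You would need to reproduce some version of this slicing argument for the proof to stand. A smaller issue: fixing the round size at $M$ words of traffic gives only the constant $(3/2)^{3/2}\approx 1.84$ in the third bound; the stated constant $2$ requires optimizing the segment length separately for each bound, taking $T=2M$ there (maximizing $T/(M+T)^{3/2}$) versus $T=M$ for the $M^{-1}$ bound.
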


In the standard case when each matrix has precision $1$, $C_p = 9/4$. The first bound corresponds to accessing each memory location at least once. The second bound dominates when individual $w_F \times h_F$ filters are large relative to the memory size $M$, and the third bound dominates when filters are small relative to $M$. In all practical cases, the precisions satisfy the triangle condition, so the first expression for $C_p$ is more relevant.

We additionally discuss a blocking technique for evaluating 7NL CNN which comes close to meeting this bound. Plots depicting this attainability are found in Figure \ref{fig:serial1}. 

We also derive the following lower bounds for the number of words communicated by some processor in a parallel-processor distributed memory model. Again, these bounds accept mixed precision data. The first two bounds are similar to our bounds in Theorem \ref{thm:singleBoundPrecision}, decaying with the memory size.

\begin{theorem}
\label{thm:parallelBoundMDPrecision}
If $X$ is the number of words communicated by \emph{7NL CNN} within a parallel processor memory model with $P$ processors each with $M$ words of memory, where \emph{Input}, \emph{Filter}, and \emph{Output} are nonoverlapping arrays with precisions $p_I, p_F, p_O$ respectively and $G := Nc_Ic_Ow_Oh_Ow_Fh_F$ is the total number of updates performed during the computation, we have
\[
 X \geq \max\left\{\frac{C_pG}{PM} - M, \frac{2(p_Ip_Fp_O)^{1/2}(\sigma_w\sigma_h)^{1/2}G }{P(w_Fh_FM)^{1/2}} - 2M\right\}
\]
with $C_p$ as in Theorem \ref{thm:singleBoundPrecision}.
\end{theorem}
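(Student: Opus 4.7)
The plan is to reduce the parallel bound to the single-processor bound of Theorem~\ref{thm:singleBoundPrecision} via a pigeonhole argument on the $P$ processors. Since the $G$ updates of 7NL CNN are partitioned among the processors, there exists some processor $p^\star$ executing at least $G/P$ updates.

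Next, I would isolate $p^\star$ and model its behavior as a single-processor computation in a two-level memory: the local memory of size $M$ plays the role of cache, and every word that $p^\star$ sends or receives on the network is charged as one unit of communication with slow memory. Under this model, $p^\star$ executes a subset of $G' \geq G/P$ valid 7NL CNN updates with cache size $M$, so any lower bound on communication for such a computation applies to the words communicated by $p^\star$, and hence to $\max_p X_p = X$ in the distributed setting.

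I would then invoke the machinery behind the second and third terms of Theorem~\ref{thm:singleBoundPrecision} (the H\"older--Brascamp--Lieb-based bounds from Section~\ref{serialbounds}). These bounds depend only on the number of updates performed, not on whether the execution is the full loop nest, because the HBL argument upper-bounds the number of updates that can be performed during a segment in terms of operand footprints; this yields, for any valid execution of $G'$ updates in cache $M$, at least $C_p G'/M - M$ and $2(p_I p_F p_O)^{1/2}(\sigma_w \sigma_h)^{1/2} G' (w_F h_F M)^{-1/2} - 2M$ words of communication. Substituting $G' \geq G/P$ gives the two expressions in the theorem. The first term of Theorem~\ref{thm:singleBoundPrecision} (touch every memory location) does not carry over, because in the distributed model data may be initially partitioned across processors and need never visit $p^\star$.

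The main obstacle is justifying step three cleanly: one must verify that the HBL-based arguments in Section~\ref{serialbounds} apply to an arbitrary subset of $G'$ updates rather than the full product structure of 7NL CNN. I expect this to go through because the HBL inequality bounds the number of loop iterations (lattice points) consistent with given operand footprints purely in terms of the exponents associated to the three projections onto Input, Filter, and Output indices, and these exponents are determined by the update geometry rather than by the size of the index box. Once that observation is in place, the reduction is immediate and the stated bound follows by plugging $G/P$ into the sequential inequality.
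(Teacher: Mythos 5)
Your proposal is correct and follows essentially the same route as the paper: the paper likewise picks a processor performing at least $G/P$ updates, treats its network traffic as communication with slow memory, splits its execution into segments of $M$ (respectively $2M$) words communicated, and reuses the per-segment HBL bounds from Lemmas~\ref{lem:singleLFPrecisionTriangle}, \ref{lem:singleLFPrecisionNoTriangle}, and \ref{lem:singleSFPrecision} to conclude. Your observation that the segment bounds apply to an arbitrary subset of updates (and that the trivial first bound does not carry over) is exactly the implicit justification the paper relies on.
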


The above bounds are only nontrivial when $M$ and $P$ are small relative to $G$. If there are many processors, or each processor has access to more memory, then we require new lower bounds corresponding in spirit to the 2.5D matrix multiplication data replication algorithms presented in \cite{de13}. These two bounds are memory independent and require a load-balancing assumption.

\begin{theorem}
\label{thm:parallelBoundMIPrecision}
Consider an execution of \emph{7NL CNN} within a parallel processor memory model with the setup of Theorem \ref{thm:parallelBoundMDPrecision}. Suppose further that each array is initially load balanced and that and $A_p := \max\{p_I|I|, p_F|F|, p_O|O|\}$ is the memory size of the largest array. Some processor must communicate $X$ words, where
\[
 X \geq (p_Ip_Fp_O)^{1/3}\max\left\{\frac{G^{1/2}}{P^{1/2}},
 \frac{(G\sigma_w\sigma_h)^{2/3}}{(Pw_Fh_F)^{2/3}} \right\} - \frac{A_p}{P}.
\]
\end{theorem}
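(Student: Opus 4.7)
I plan to follow the standard memory-independent parallel lower bound strategy (analogous to the 2.5D matrix multiplication argument of Ballard--Demmel et al.): combine pigeonhole on work with the HBL-type inequalities that already drive Theorem~\ref{thm:singleBoundPrecision}, substituting the cache-size parameter $M$ by the amount of data a processor can touch under load balancing plus communication.

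By pigeonhole, some processor $q$ performs at least $F_q \ge G/P$ of the updates. Let $n_I, n_F, n_O$ denote the number of distinct Input, Filter, and Output elements ever accessed by $q$; each such element must either sit in $q$'s initial memory or be received via communication, and the load-balancing hypothesis bounds the initial share of each array on $q$ by $|A_j|p_j/P \le A_p/P$, giving $p_I n_I + p_F n_F + p_O n_O \le A_p/P + X$ (after absorbing a small constant from summing across the three arrays). I then apply the two HBL-style inequalities underlying Theorem~\ref{thm:singleBoundPrecision}: the exponent-sum-$2$ bound gives $F_q \le (p_I n_I + p_F n_F + p_O n_O)^2/(4C_p)$, which together with $F_q \ge G/P$ and the triangle-case formula for $C_p$ yields the $(p_I p_F p_O)^{1/3}(G/P)^{1/2}$ contribution; and the $\sqrt{n_I n_F n_O}$-type HBL bound, combined with weighted AM--GM on $(p_j n_j)$, yields a $3/2$-power bound in the total word count that inverts to the $(p_I p_F p_O)^{1/3}\bigl(G\sigma_w\sigma_h/(P w_F h_F)\bigr)^{2/3}$ contribution. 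Taking the maximum of the two and subtracting the $A_p/P$ initial data completes the proof.

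The main obstacle will be threading the precision-dependent constants through both HBL--AM--GM compositions so that the $(p_I p_F p_O)^{1/3}$ prefactor emerges cleanly in both terms of the max. The first term requires resolving the $C_p$ case split (triangle vs.\ degenerate) via the correct weighted AM--GM in every precision regime, and the second term requires the AM--GM to be set up with precision weights so that $(p_I p_F p_O)^{1/3}$ pulls out of $(p_I n_I)(p_F n_F)(p_O n_O)$. A secondary technical point is sharpening the load-balancing step so that the subtracted term is $A_p/P$ rather than the naive $3A_p/P$ one would get by bounding all three arrays' initial shares separately--for instance, by working with the single array that is the binding constraint in the HBL inequality.
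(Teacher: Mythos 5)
Your proposal is correct and follows essentially the same route as the paper: pigeonhole a processor performing at least $G/P$ updates, apply the two HBL inequalities (the exponent-sum-$2$ bound and the lifted small-filter bound) to the set $V$ of updates it performs, and subtract the initially resident data permitted by load balancing. The only real difference is that the paper sidesteps your ``$3A_p/P$ versus $A_p/P$'' issue from the outset by using the symmetric exponents $(2/3,2/3,2/3)$ and arguing that the \emph{largest} weighted projection $p_j|\phi_j(V)|$ is at least the geometric mean $\bigl(p_Ip_Fp_O\bigr)^{1/3}|V|^{1/2}$ (and analogously in the small-filter case) --- precisely the single-binding-array refinement you anticipate --- which produces the $(p_Ip_Fp_O)^{1/3}$ prefactor and the $-A_p/P$ term in one step without any $C_p$ case split.
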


We introduce the machinery used to obtain these lower bounds, and then present careful proofs.

\subsection{The H\"older-Brascamp-Lieb Inequality} \label{hbl}

One of the key mathematical tools for proving theoretical lower bounds for the communication cost of nearly any numerical linear algebra computation is the collection of H\"older-Brascamp-Lieb inequalities. These inequalities bound the sizes of sets in terms of the sizes of their linear projections.  Their simplest form, the Loomis-Whitney inequality, is already useful for bounding the communication cost of matrix multiplication \cite{bcdhks14}.  When $A \subset \mathbb{Z}^3$ is a finite set of integer lattice points and $A_x$, $A_y$, and $A_z$ its 2D projections along each axis, the Loomis-Whitney inequality states
\[ |A| \leq \sqrt{|A_x| |A_y| |A_z|}. \]
In words, the size of $A$ is bounded in terms of the sizes of its shadows.
\begin{figure}[ht]
  \label{fig:loomiswhitney}
  \centering
  \includegraphics[width=5cm]{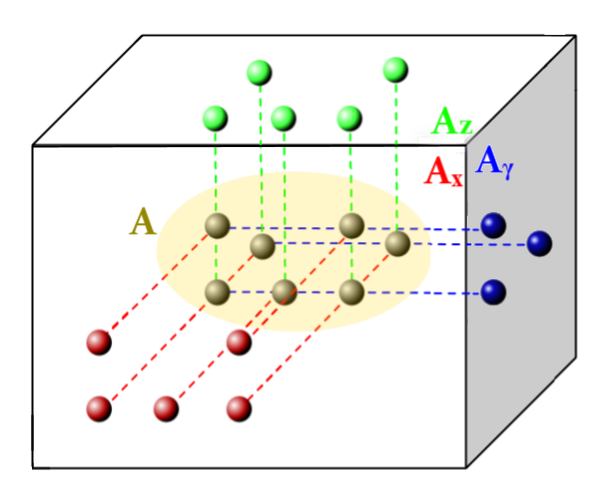}
  \caption{Loomis-Whitney inequality $|A| \leq \sqrt{|A_x| |A_y| |A_z|}$.}
  \end{figure}

To perform 3-nested-loop matrix multiplication, the $(i, j, k)$th operation in computing $C = AB$ is setting $C[i, k] = A[i, j]B[j, k]$.  But these locations are the projections.  So one must access \\
$|A_x| + |A_y| + |A_z|$ memory locations to perform all $|A|$ operations.  The Loomis-Whitney inequality tells us how many operations we can perform when accessing a limited number of memory locations.

This idea can be generalized to loop nests of arbitrary size whose array accesses are affine functions of the indices. The key tool is a generalization of the Loomis-Whitney inequalities, a discrete formulation of the H\"older-Brascamp-Lieb inequalities:

\begin{theorem}[Discrete HBL \cite{cdksy13}]
\label{thm:HBL}
Let $d$ and $d_j$ be nonnegative integers, and for $j = 1, \dots, m$ let $\phi_j : \mathbb{Z}^d \to \mathbb{Z}^{d_j}$ be group homomorphisms. If $s_j \in [0, 1]$ for $j = 1,\dots, m$ satisfy the following collection of linear constraints:
\[ \operatorname{rank}(H) \leq \sum_{j = 1}^m s_j\operatorname{rank}(\phi_j(H)) \textrm{ for each subgroup } H \leq \mathbb{Z}^d\]
then we find the bound
\[ |V| \leq \prod_{j = 1}^m |\phi_j(V)|^{s_j} \textrm{ for each finite set } V \subseteq \mathbb{Z}^d.\]
\end{theorem}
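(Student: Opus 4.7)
The plan is to reduce the set-size inequality to an entropy inequality and then invoke the continuous Brascamp--Lieb theorem via tensorization, following the strategy of Christ, Demmel, Knight, Scharf, and Yelick. First, let $X$ be a random vector uniformly distributed on $V$. Then $\log |V| = H(X)$ and, since entropy is non-increasing under deterministic maps whose image lies in a set of size $|\phi_j(V)|$, we have $H(\phi_j(X)) \leq \log |\phi_j(V)|$. Thus it suffices to prove the entropy inequality
\[
 H(X) \leq \sum_{j=1}^m s_j \, H(\phi_j(X))
\]
for every finitely supported $\mathbb{Z}^d$-valued random variable $X$, under the same rank hypotheses on the $\phi_j$.

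Next, I would tensorize. Let $X_1,\dots,X_n$ be i.i.d.\ copies of $X$ and set $X^{(n)} := (X_1,\dots,X_n) \in (\mathbb{Z}^d)^n$, with product homomorphisms $\phi_j^{\oplus n} : (\mathbb{Z}^d)^n \to (\mathbb{Z}^{d_j})^n$. Entropy is additive over independent components, so the target inequality for $X^{(n)}$ is equivalent to that for $X$. Normalizing and sending $n \to \infty$, the local central limit theorem and Gaussian saturation reduce the problem to the continuous Brascamp--Lieb inequality for centered Gaussian measures, with exponents $s_j$ and linear maps obtained by extending the $\phi_j$ to $\mathbb{R}$-linear maps. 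The theorem of Bennett--Carbery--Christ--Tao says that these exponents are admissible precisely when the finiteness condition $\dim H' \leq \sum_j s_j \dim \phi_j(H')$ holds for every subspace $H' \leq \mathbb{R}^d$; the hypothesis of Theorem~\ref{thm:HBL} gives exactly this, since the rank of a subgroup $H \leq \mathbb{Z}^d$ equals the dimension of its $\mathbb{R}$-span and $\phi_j$ is $\mathbb{Z}$-linear.

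The main obstacle will be justifying the passage from the discrete to the continuous side cleanly: one has to show that the asymptotic entropy rate of a suitably normalized $X^{(n)}$ converges to the differential entropy of the limiting Gaussian, and then import the continuous Brascamp--Lieb theorem, which is itself a deep result whose shortest proofs go through heat-flow monotonicity \`a la Carlen--Lieb--Loss. An alternative route that avoids the analytic machinery is to prove the entropy inequality directly by induction on $d$: pick a rank-$(d{-}1)$ subgroup $K \leq \mathbb{Z}^d$, apply the chain rule $H(X) = H(X \bmod K) + H(X \mid X \bmod K)$, and invoke the inductive hypothesis on the quotient $\mathbb{Z}^d/K$ and on $K$ using the induced and restricted homomorphisms. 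Verifying that the exponents $s_j$ split correctly between the two pieces, and checking that the rank conditions on $\mathbb{Z}^d$ yield valid rank conditions on $K$ and $\mathbb{Z}^d/K$, is the delicate combinatorial step; this is where the universal quantifier \emph{for every subgroup $H$} in the hypothesis is used most heavily, and it is what makes an apparently infinite family of constraints effectively finite via a vertex-of-polytope argument on the feasible set of $(s_j)$.
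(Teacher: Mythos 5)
The paper does not prove Theorem~\ref{thm:HBL} at all: it is imported verbatim from Section~3 of \cite{cdksy13}, where the argument is a purely combinatorial induction on \emph{critical} subgroups in the style of Bennett, Carbery, Christ, and Tao's discrete/abelian-group argument, with no entropy, no Gaussians, and no limiting procedures. So your proposal is necessarily a different route, and each of its two branches has a genuine gap. The analytic branch fails at the scaling condition: finiteness of the continuous Brascamp--Lieb constant for the extended maps $\Phi_j:\mathbb{R}^d\to\mathbb{R}^{d_j}$ requires not only $\dim H' \leq \sum_j s_j \dim \Phi_j(H')$ for all subspaces $H'$ but also the \emph{equality} $d = \sum_j s_j \dim \Phi_j(\mathbb{R}^d)$, whereas the hypothesis of Theorem~\ref{thm:HBL} gives only an inequality at the top group. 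Concretely, take $m=2$, $\phi_1=\phi_2=\mathrm{id}$, $s_1=s_2=1$: every rank constraint holds and the discrete conclusion $|V|\leq|V|^2$ is trivially true, yet the Gaussian Brascamp--Lieb constant is infinite, so there is nothing to import. The same equality is what would make the $\Theta(n)$ discrepancy between the discrete entropy of $X^{(n)}$ and the differential entropy of its rescaled Gaussian limit cancel across the two sides of the inequality; without it the local-CLT passage does not close. Repairing this by first descending to a point of the feasible polytope where the top constraint is tight, while preserving every subgroup constraint and $s_j\in[0,1]$, is itself a nontrivial polyhedral argument you have not supplied.

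The inductive branch has exactly the gap you flagged yourself, and it is not a routine verification --- it is the theorem. The chain rule $H(X)=H(X\bmod K)+H(X\mid X\bmod K)$ lets you induct only if the \emph{same} exponents $(s_j)$ satisfy the rank hypotheses for the restricted homomorphisms on $K$ and the induced homomorphisms on $\mathbb{Z}^d/K$; for an arbitrary corank-one subgroup $K$ this is simply false. The known proofs choose $K$ to be critical, i.e.\ $\operatorname{rank}K=\sum_j s_j\operatorname{rank}\phi_j(K)$, show that criticality is precisely what makes both the subgroup and quotient data feasible with the same exponents, and treat separately (by perturbing $(s_j)$ to a vertex or face of the feasible polytope) the case in which no proper nonzero critical subgroup exists. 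Until you supply that factorization lemma, the proposal locates where a proof would live rather than giving one; as written, neither branch yields Theorem~\ref{thm:HBL}.
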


This result is stated and proved in detail in Section 3 of \cite{cdksy13}. We call such a tuple $(s_1, \dots, s_m)$ \emph{HBL exponents} for the \emph{HBL datum} $(\mathbb{Z}^d, (\mathbb{Z}^{d_j}), (\phi_j))$ and attempt to minimize the sum of the $s_j$ subject to the linear constraints.

These inequalities provide a powerful tool for understanding data movement. For our purposes, they allow us to bound the number of updates that an execution of the 7NL CNN algorithm is able to complete if it is only allowed a certain number of accesses to each array. More specifically, suppose we run 7NL CNN within a single-processor one level memory model with cache size $M$. We allow the updates to be executed in an arbitrary order. Consider a continuous segment of updates during the execution which makes exactly $T$ communications with main memory, whether loads from Input or Filter, or stores to Output. At the start of the segment, we have access to no more than $M$ elements from any of the three arrays, and we may load no more than $T$ elements during the segment. So the number of elements accessible from each array is at most $M + T$.

Now define three \emph{array access homomorphisms} which map any tuple of loop indices $(i_j)$ to each of the tuples of indices accessed in each array. For example, $\phi_O : \mathbb{Z}^7 \to \mathbb{Z}^4$ defined by $\phi_O(i_1, \dots, i_7) = (i_1, i_3, i_4, i_5)$ is the array access homomorphism for the Output array as seen in the model presented in Section \ref{cnn}. Then any valid tuple of HBL exponents $(s_j)$ satisfying the constraints of Theorem \ref{thm:HBL} provides a bound on the size of $V$, the set of updates the execution may compute during this segment, in terms of the sizes of the $\phi_j(V)$'s, the number of elements which need to be accessed from each array:
\[ |V| \leq \prod_{j=1}^m |\phi_j(V)|^{s_j} \leq \prod_{j=1}^m (M + T)^{s_j} = (M + T)^{\sum_j s_j}\]
Let $s = \sum_j s_j$. Now, if the entire execution must make $G$ updates, then splitting the entire execution up into $L$ segments each with about $T$ communications, each segment may do no more than $(M + T)^s$ updates and so there must be at least $G/(M + T)^s - 1$ segments. Each has $T$ communications, so the total number of communication must be at least $GT/(M + T)^s - T$. Choosing $T = M$ for now, we find the number of communications to be $\Omega(G/M^{s-1})$. If one takes the time to compute the optimal $(s_j)$ for various choices of array access homomorphisms, this exact same proof sketch results in the well-known asymptotic communication lower bounds one finds for many common linear algebra algorithms such as those tabulated in \cite{bcdhks14}. We now discuss how to make practical use of Theorem \ref{thm:HBL}. Note that because each rank is an integer between $0$ and $d$, the number of constraints that we need to check for a given tuple $(s_1, \dots, s_j)$ is finite. We can further reduce the workload by showing that the only subgroups which need to be checked are those in the subgroup lattice generated by the kernels of the $\phi_j$. The \emph{lattice} generated by a family of subgroups is the smallest collection of subgroups which contains this family and which is closed under intersection and sum of subgroups. We denote our lattice of interest by $\operatorname{Lattice}(\ker \phi_j)$. This is formalized in the following proposition:

\begin{proposition}
\label{prop:kerReduction}
Let $\phi_j$ be homomorphisms and $s_j$ exponents as in Theorem \ref{thm:HBL}, and consider $\operatorname{Lattice}(\ker \phi_j)$, the lattice generated by the subgroups $\ker \phi_j$. If
\[ \operatorname{rank}(H) \leq \sum_{j = 1}^m s_j\operatorname{rank}(\phi_j(H)) \textrm{ for each } H \in \operatorname{Lattice}(\ker \phi_j)\]
then
\[ \operatorname{rank}(H) \leq \sum_{j = 1}^m s_j\operatorname{rank}(\phi_j(H)) \textrm{ for each subgroup } H \leq \mathbb{Z}^d\]
and the conclusion of HBL follows.
\end{proposition}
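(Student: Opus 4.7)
The plan is to reduce the problem to a statement about $\mathbb{Q}$-subspaces, where the ambient lattice of subspaces is modular. Since $\operatorname{rank}(H) = \dim_{\mathbb{Q}}(H \otimes \mathbb{Q})$ and each $\phi_j$ extends uniquely to a $\mathbb{Q}$-linear map, the HBL constraint for $H \leq \mathbb{Z}^d$ is equivalent to the corresponding constraint for $V := H \otimes \mathbb{Q} \leq \mathbb{Q}^d$, and $\operatorname{Lattice}(\ker \phi_j)$ lifts to the sublattice generated by $K_j := \ker \phi_j \otimes \mathbb{Q}$. Working with subspaces rather than subgroups is convenient because the subspace lattice of $\mathbb{Q}^d$ is modular and finite-dimensional, which is what makes the structural arguments below tractable.

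The key technical ingredient I would then establish is that the slack function $F(V) := \sum_j s_j \operatorname{rank}(\phi_j(V)) - \operatorname{rank}(V)$ is submodular: $F(V_1 + V_2) + F(V_1 \cap V_2) \leq F(V_1) + F(V_2)$. This combines the modular identity $\operatorname{rank}(V_1+V_2) + \operatorname{rank}(V_1\cap V_2) = \operatorname{rank}(V_1)+\operatorname{rank}(V_2)$ with the submodularity of $V \mapsto \operatorname{rank}(\phi_j(V))$: images commute with $+$ but the inclusion $\phi_j(V_1 \cap V_2) \subseteq \phi_j(V_1) \cap \phi_j(V_2)$ may be strict. Proceeding by contrapositive, assume some $V_0$ violates HBL, so $F(V_0) < 0$; the hypothesis supplies $F(L) \geq 0$ for every $L \in \operatorname{Lattice}(K_j)$, in particular $F(K_j) \geq 0$. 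Submodularity then yields $F(V + K_j) + F(V \cap K_j) \leq F(V) + F(K_j)$, permitting a descent step in which $V$ is replaced by either $V+K_j$ or $V\cap K_j$ while tracking how far $F$ strays from its original negative value.

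The main obstacle is closing the loop: showing that such a descent procedure can be steered into $\operatorname{Lattice}(K_j)$ itself, rather than merely converging to some stable point that still involves $V_0$ in its construction. One cannot simply argue that the descent always strictly decreases $F$, since submodularity only controls the average. The way I would attack this is structural: because $\mathbb{Q}^d$ is finite-dimensional, there are only finitely many possible \emph{rank profiles} $(\operatorname{rank}(V), \operatorname{rank}(\phi_1(V)), \dots, \operatorname{rank}(\phi_m(V)))$, and the value $F(V)$ depends only on this profile. I would then show that every achievable profile is already realized by some canonical lattice element — constructed by repeatedly intersecting and summing the $K_j$ in a pattern that mirrors $V$'s incidence with the kernels $V \cap K_{j_1} \cap \cdots \cap K_{j_k}$ — so that replacing $V_0$ by this representative gives $L \in \operatorname{Lattice}(K_j)$ with $F(L) = F(V_0) < 0$, contradicting the hypothesis. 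The delicate bookkeeping in identifying the correct lattice representative is where the real work lies.
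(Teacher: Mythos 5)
Your opening reduction to $\mathbb{Q}$-subspaces is exactly the paper's first step (the paper gets it by citing the equivalence of the discrete and continuous polytopes from Section~2.2 of \cite{cdksy15}, plus the observation that $\mathbb{Q}$-spans of the subgroups in $\operatorname{Lattice}(\ker\phi_j)$ generate $\operatorname{Lattice}(\ker\Phi_j)$). The divergence is in the second step: the paper does not prove the lattice reduction at all --- it invokes Theorem~8 of \cite{val10} (Valdimarsson) as a black box --- whereas you are attempting to reprove that theorem from scratch. Your one genuinely established ingredient, submodularity of $F(V)=\sum_j s_j\operatorname{rank}(\Phi_j(V))-\operatorname{rank}(V)$ on the (modular) subspace lattice, is correct: $\dim$ is modular, each $V\mapsto\dim\Phi_j(V)$ is submodular because images commute with sums but only satisfy $\Phi_j(V_1\cap V_2)\subseteq\Phi_j(V_1)\cap\Phi_j(V_2)$, and $s_j\ge 0$. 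A useful consequence you could have extracted is that the set of minimizers of $F$ is a sublattice of the subspace lattice. But note that your descent step is weaker than you suggest: from $F(V)<0$ and $F(K_j)\ge 0$, the inequality $F(V+K_j)+F(V\cap K_j)\le F(V)+F(K_j)$ does not even guarantee that one of $V+K_j$, $V\cap K_j$ still violates HBL when $F(K_j)>0$.

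The real gap is the closing mechanism you propose: the claim that every achievable rank profile $(\operatorname{rank}(V),\operatorname{rank}(\Phi_1(V)),\dots,\operatorname{rank}(\Phi_m(V)))$ is realized by some element of $\operatorname{Lattice}(\ker\Phi_j)$, so that a violator $V_0$ can be replaced by a lattice element $L$ with $F(L)=F(V_0)$. This is false. Take $d=2$ with $\Phi_1(x,y)=x$ and $\Phi_2(x,y)=y$; the lattice generated by the kernels is $\{0,\,K_1,\,K_2,\,\mathbb{Q}^2\}$ with profiles $(0,0,0)$, $(1,0,1)$, $(1,1,0)$, $(2,1,1)$, while the diagonal subspace has profile $(1,1,1)$, realized by no lattice element. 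The conclusion of the proposition still holds there --- the diagonal's inequality $1\le s_1+s_2$ is implied by the top element's inequality $2\le s_1+s_2$ --- but it is implied as a positive combination of lattice inequalities, not by matching $F$-values. In other words, the general subspace inequalities lie in the cone generated by the lattice inequalities for reasons that your profile-matching argument cannot see; establishing that is precisely the content of Valdimarsson's theorem, and it is the one step your proof cannot omit. As written, the proposal proves the easy reduction and leaves the hard kernel of the proposition unproved, with a repair strategy that a two-dimensional example already defeats.
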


\begin{proof}
Let $\Phi_j : \mathbb{Q}^d \to \mathbb{Q}^{d_j}$ extend $\phi_j$ to a $\mathbb{Q}$-linear map. In Section 2.2 of \cite{cdksy15}, it is proved that the polytope of $(s_j)$ satisfying the linear constraints in Theorem \ref{thm:HBL} is exactly equal to the polytope of $(s_j)$ satisfying, for each subspace $V \leq \mathbb{Q}^d$,
\[ \dim V \leq \sum_{j=1}^m s_j \dim \Phi_j(V). \]
In \cite{val10}, Theorem 8 states that it suffices to check only these inequalities from subspaces in $\operatorname{Lattice}(\ker \Phi_j)$. To check these, it suffices to check the original inequalities from Theorem \ref{thm:HBL} on the subgroups in $\operatorname{Lattice}(\ker \phi_j)$ and then to take $\mathbb{Q}$-linear spans of these subgroups. This completes the proof.
\end{proof}
We can reduce the number of subgroups to check even further. We do this by splitting $\operatorname{Lattice}(\ker \phi_j)$ into several independent sublattices. We call two collections of subgroups $\{A_i\}$ and $\{B_k\}$ \emph{independent} if $\sum_i A_i \cap \sum_k B_k = \{0\}$. If $A = \{A_i\}$ and $B = \{B_k\}$ are independent collections, it quickly follows that
\begin{align*}
 \operatorname{Lattice}(A \cup B) &= \operatorname{Lattice}(A) + \operatorname{Lattice}(B) \\
 \operatorname{rank}(A_i + B_k) &= \operatorname{rank}(A_i) + \operatorname{rank}(B_k) \\
 \operatorname{rank}(\phi_j(A_i + B_k)) &= \operatorname{rank}(\phi_j(A_i)) + \operatorname{rank}(\phi_j(B_k)).
\end{align*}
Then if $\operatorname{Lattice}(\ker\phi_j) = \sum_k \operatorname{Lattice}(A^k)$ where the $A^k$'s are pairwise independent collections of subgroups, it suffices to check the constraints in Theorem \ref{thm:HBL} on the subgroups in each $\operatorname{Lattice}(A^k)$. As we will see, these reductions are capable of reducing the number of linear constraints we need to check from hundreds to only a few. We will leverage this to compute optimal exponent tuples $(s_j)$ for the array access homomorphisms in 7NL CNN.

\section{Single Processor Communication Bounds} \label{single}

\subsection{Derivation of New Bounds}
\label{serialbounds}

In this section, we prove Theorem \ref{thm:singleBoundPrecision}. Performing an analysis with the HBL inequalities, we derive precise lower bounds, taking care to optimize constants.
For example, in the standard precision case $p_I = p_F = p_O = 1$, the bound becomes
\[
 X \geq \max\left\{|I| + |F| + |O| , \frac{9G}{4M} - M, \frac{2G(\sigma_w\sigma_h)^{1/2}}{(w_Fh_FM)^{1/2}} - 2M\right\}
\]
The first bound doesn't depend on the memory size. The second bound exhibits $\Omega(1/M)$ decay, while the third bound exhibits $\Omega(1/M^{1/2})$ decay. However, it is important to note that the third bound only eclipses the second bound when $w_F h_F < \frac{64 M \sigma_w \sigma_h}{81}$, i.e. when the filters are small relative to the memory size.

We prove each of the three bounds below. For the second and third bound, we will make use of the HBL theory discussed in Section \ref{hbl}. First, we have a trivial memory-independent bound:

\begin{lemma}
\label{lem:singleTrivPrecision}
With the setup in Theorem \ref{thm:singleBoundPrecision}, the number of words communicated $X$ satisfies
\[
 X \geq p_I|I| + p_F|F| + p_O|O|
\]
\end{lemma}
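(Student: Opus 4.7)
The plan is a straightforward counting argument: each of the three arrays contributes at least its precision-weighted size to $X$ separately, because they occupy disjoint regions of main memory. For Input and Filter I would argue that every element is touched by at least one iteration of the loop nest and therefore must be loaded from main memory into cache at least once, contributing $p_I$ or $p_F$ words per element. For Output I would argue that every element is updated at least once and its final accumulated value must reside in main memory when the computation completes, which (for a data-oblivious execution) forces at least one store of $p_O$ words per element.

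The one check that actually requires the hypotheses from Section~\ref{cnn} is that every element of each array is in fact accessed by some iteration. For Filter and Output the access functions $(i_2,i_3,i_6,i_7)$ and $(i_1,i_3,i_4,i_5)$ are projections onto subsets of the loop variables, which range over the full index sets of the corresponding arrays, so surjectivity is immediate and each of the $|F|$ and $|O|$ distinct entries is hit. For Input, indexed by $(i_1,i_2,\sigma_w i_4+i_6,\sigma_h i_5+i_7)$, I would invoke the stated assumptions $\sigma_w\le w_F$ and $\sigma_h\le h_F$: these guarantee that the maps $(i_4,i_6)\mapsto \sigma_w i_4+i_6$ and $(i_5,i_7)\mapsto \sigma_h i_5+i_7$ surject onto their full output ranges (no gaps, since successive starting positions $\sigma_w i_4$ overlap the width-$w_F$ filter windows), so every one of the $|I|$ Input entries corresponds to at least one valid iteration.

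There is no real obstacle here, as this is the classical memory-independent access-once bound; the only subtle point is that the three per-array contributions can be added without double counting, which is exactly what the non-overlap hypothesis on Input, Filter, and Output in the theorem statement supplies. Once surjectivity and non-overlap are in hand, summing the lower bounds $p_I|I|$, $p_F|F|$, and $p_O|O|$ yields the claimed inequality.
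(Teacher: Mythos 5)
Your proposal is correct and follows essentially the same argument as the paper's proof: every Input and Filter entry must be read and every Output entry written, each costing its precision in words, and non-overlap lets the three contributions be summed. You are somewhat more explicit than the paper in verifying surjectivity of the access maps (using $\sigma_w\le w_F$, $\sigma_h\le h_F$) and in invoking the non-overlap hypothesis, but the route is the same.
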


\begin{proof}
Every entry of Input and Filter must be accessed at least once, and every entry in Output must be filled by the computation. All three arrays reside in slow memory, so at minimum $p_j$ words must be communicated for every entry in the $j$th array, for $j \in \{I, F, O\}$. So the number of words communicated $X$ satisfies
\[
X\geq p_I\abs{I}+p_F\abs{F}+p_O\abs{O}\qedhere
\]
\end{proof}

Before proving the second bound, we perform an HBL analysis on the array-access homomorphisms corresponding to 7NL CNN. First, we define the homomorphisms $\phi_I, \phi_F, \phi_O : \mathbb{Z}^7 \to \mathbb{Z}^4$:
\begin{align*}
\phi_I(i_1, i_2, i_3, i_4, i_5, i_6, i_7) &= (i_1, i_2, i_6 + \sigma_w i_4, i_7 + \sigma_h i_5) & \\
\phi_F(i_1, i_2, i_3, i_4, i_5, i_6, i_7) &= (i_2, i_3, i_6, i_7) & \\
\phi_O(i_1, i_2, i_3, i_4, i_5, i_6, i_7) &= (i_1, i_3, i_4, i_5) &
\end{align*}
Note that in Section \ref{cnn} the iteration of 7NL CNN corresponding to the loop indices $(i_j)$ uses the data at Input$(\phi_I(i_j))$ and Filter$(\phi_F(i_j))$ to update the data at Output$(\phi_O(i_j))$. The discussion in Section \ref{hbl} suggests that we analyze the lattice generated by the kernels of these homomorphisms. Using the indices $i_j$ as free variables in $\mathbb{Z}$, we can write the kernels as follows:
\begin{align*}
\ker \phi_I &= (0, 0, i_3, i_4, i_5, -\sigma_w i_4, -\sigma_h i_5) & \\
\ker \phi_F &= (i_1, 0, 0, i_4, i_5, 0, 0) & \\
\ker \phi_O &= (0, i_2, 0, 0, 0, i_6, i_7) &
\end{align*}
We can identify the following independent families of indices: $\{i_1\}$, $\{i_2\}$, $\{i_3\}$, $\{i_4, i_6\}$, and $\{i_5, i_7\}$. We call these independent because they give rise to the following pairwise independent collections of subgroups which generate the kernels we want, and hence the lattice we want:
\begin{align*}
C_1 &= \{(i_1, 0, 0, 0, 0, 0, 0)\} = \{C_{1, 1}\} \\
C_2 &= \{(0, i_2, 0, 0, 0, 0, 0)\} = \{C_{2, 1}\} \\
C_3 &= \{(0, 0, i_3, 0, 0, 0, 0)\} = \{C_{3, 1}\} \\
C_4 &= \{(0, 0, 0, i_4, 0, 0, 0), (0, 0, 0, 0, 0, i_6, 0), (0, 0, 0, i_4, 0, -\sigma_w i_4, 0)\} \\
 &= \{C_{4, 1}, C_{4, 2}, C_{4, 3}\} \\
C_5 &= \{(0, 0, 0, 0, i_5, 0, 0), (0, 0, 0, 0, 0, 0, i_7), (0, 0, 0, 0, i_5, 0, -\sigma_h i_5)\} \\
&= \{C_{5, 1}, C_{5,2}, C_{5,3}\}
\end{align*}
These subgroups give the following breakdown of the kernels:
\begin{align*}
\ker \phi_I &= C_{3,1} + C_{4,3} + C_{5,3} & \\
\ker \phi_F &= C_{1,1} + C_{4,1} + C_{5,1} & \\
\ker \phi_O &= C_{2,1} + C_{4,2} + C_{5,2} & 
\end{align*}
In order to apply Theorem \ref{thm:HBL}, the discussion in Section \ref{hbl} concludes that it suffices to check the constraints only on subgroups in the five lattices, $\operatorname{Lattice}(C_j)$. Fortunately, $\operatorname{Lattice}(C_j) = C_j$ for $j = 1,2,3$. For $C_4$ and $C_5$ we have:
\begin{align*}
\operatorname{Lattice}(C_4) &= C_4 \cup \{(0, 0, 0, i_4, 0, i_6, 0)\} = C_4 \cup \{C_{4, 4}\} \\
\operatorname{Lattice}(C_5) &= C_5 \cup \{(0, 0, 0, 0, i_5, 0, i_7)\} = C_5 \cup \{C_{5, 4}\}
\end{align*}
Now suppose $s_I, s_F, s_O \in [0, 1]$. Then to apply Theorem \ref{thm:HBL}, we need to satisfy the following inequality for each $H$ in some $\operatorname{Lattice}(C_j)$:
\[ \operatorname{rank}(H) \leq s_I \operatorname{rank}(\phi_I(H)) + s_F \operatorname{rank}(\phi_F(H)) + s_O \operatorname{rank}(\phi_O(H)). \]
We enumerate these inequalities in the table below:
\begin{center}
\begin{tabular}{|c|c|c|c|c|c|}
\hline
$H$ & $\mathrm{rk}$ & $\mathrm{rk}\circ \phi_I$ & $\mathrm{rk}\circ \phi_F$ & $\mathrm{rk}\circ \phi_O$ & Constraint \\
\hline
$C_{1,1}$ & 1 & 1 & 0 & 1 & $1 \leq s_I + s_O$ \\
$C_{2,1}$ & 1 & 1 & 1 & 0 & $1 \leq s_I + s_F$ \\
$C_{3,1}$ & 1 & 0 & 1 & 1 & $1 \leq s_F + s_O$ \\
$C_{4,1}$ & 1 & 1 & 0 & 1 & $1 \leq s_I + s_O$ \\
$C_{4,2}$ & 1 & 1 & 1 & 0 & $1 \leq s_I + s_F$ \\
$C_{4,3}$ & 1 & 0 & 1 & 1 & $1 \leq s_F + s_O$ \\
$C_{4,4}$ & 2 & 1 & 1 & 1 & $2 \leq s_I + s_F + s_O$ \\
$C_{5,1}$ & 1 & 1 & 0 & 1 & $1 \leq s_I + s_O$ \\
$C_{5,2}$ & 1 & 1 & 1 & 0 & $1 \leq s_I + s_F$ \\
$C_{5,3}$ & 1 & 0 & 1 & 1 & $1 \leq s_F + s_O$ \\
$C_{5,4}$ & 2 & 1 & 1 & 1 & $2 \leq s_I + s_F + s_O$ \\
\hline
\end{tabular}
\end{center}
Removing repeated inequalities, Theorem \ref{thm:HBL} states that as long as $1 \leq s_I + s_F$, $1 \leq s_I + s_O$, $1 \leq s_F + s_O$, and $2 \leq s_I + s_F + s_O$, then for any finite set $V \subseteq \mathbb{Z}^7$ we have:
\[|V| \leq |\phi_I(V)|^{s_I}|\phi_F(V)|^{s_F}|\phi_O(V)|^{s_O}. \]

We can now begin the proof of the second bound. First, we handle the case when the triangle condition is met.

\begin{lemma}
\label{lem:singleLFPrecisionTriangle}
With the setup in Theorem \ref{thm:singleBoundPrecision}, as long as $p_I \leq p_F + p_O$, $p_F \leq p_I + p_O$, and $p_O \leq p_I + p_F$, the number of words communicated $X$ satisfies
\[
   X \geq \frac{(p_I + p_F + p_O)^2G}{4M} - M
\]
\end{lemma}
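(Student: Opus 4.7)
The plan is to apply the HBL machinery set up just above the lemma, but with the HBL exponents chosen as a precision-weighted average so that the precision weights factor cleanly into the final constant.

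First, I would pick the HBL exponents
\[
 s_I = \frac{2 p_I}{p_T}, \quad s_F = \frac{2 p_F}{p_T}, \quad s_O = \frac{2 p_O}{p_T},
\]
and verify that these are admissible. The condition $s_j \leq 1$ is exactly $2p_j \leq p_T$, which is the triangle condition; the pairwise condition $s_j + s_k \geq 1$ rearranges to $p_j + p_k \geq p_\ell$, again the triangle condition; and $s_I + s_F + s_O = 2$ holds by construction. So by the table derived just above the lemma, these $(s_I,s_F,s_O)$ are valid HBL exponents for the 7NL CNN array-access maps.

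Next, I would do the standard segmenting argument. Split the execution into consecutive segments, each containing exactly $T$ words of communication (the last segment may be shorter). Inside a segment, every Input/Filter/Output element used is either in cache at the start or loaded/stored during the segment. Writing $n_j = |\phi_j(V)|$ for the number of distinct $j$-array elements touched and splitting the cache and communication budgets as $\sum_j m_j \leq M$, $\sum_j t_j \leq T$ with $p_j n_j \leq m_j + t_j$, we obtain the single constraint
\[
 p_I n_I + p_F n_F + p_O n_O \leq M + T.
\]
Combined with HBL, $|V| \leq n_I^{s_I} n_F^{s_F} n_O^{s_O}$.

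The key computation is then to maximize $n_I^{s_I} n_F^{s_F} n_O^{s_O}$ subject to this single linear constraint. A Lagrange-multiplier (or weighted AM--GM) calculation gives $p_j n_j = s_j(M+T)/(s_I+s_F+s_O) = p_j(M+T)/p_T$, i.e.\ $n_j = (M+T)/p_T$, producing
\[
 |V| \;\leq\; \left(\frac{M+T}{p_T}\right)^{s_I+s_F+s_O} \;=\; \left(\frac{M+T}{p_T}\right)^2.
\]
This is the per-segment bound on updates. Setting $T = M$ makes the per-segment cap $(2M/p_T)^2 = 4M^2/p_T^2$. Since the total number of updates is $G$, the number of full segments is at least $G\,p_T^2/(4M^2)$, each contributing $M$ words of communication, giving
\[
 X \;\geq\; M \cdot \frac{G\,p_T^2}{4M^2} - M \;=\; \frac{(p_I+p_F+p_O)^2 G}{4M} - M,
\]
where the subtracted $M$ accounts for the possibly incomplete final segment.

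The only real obstacle is picking the exponents cleanly: once one realizes that the precision-weighted choice $s_j = 2p_j/p_T$ exactly uses up the triangle-condition slack while equalizing $p_j n_j$ at the optimum, the proof is routine bookkeeping. An alternative path — searching the vertices of the HBL polytope and optimizing the constant by hand — would work but gives a messier calculation; the symmetric weighted choice is what yields the clean $(p_T/2)^2$ constant.
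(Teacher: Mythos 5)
Your proof is correct and follows essentially the same route as the paper's: the same segmenting argument, the same HBL inequality with the single linear constraint $p_I n_I + p_F n_F + p_O n_O \le M+T$, the same optimal exponents $s_j = 2p_j/p_T$, and the same choice $T=M$. The only difference is cosmetic --- you guess the optimal exponents up front and verify admissibility via the triangle condition, whereas the paper derives them through a second Lagrange optimization over the exponent polytope; this is a minor streamlining that changes nothing substantive.
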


\begin{proof}

We split the execution of the 7NL CNN computation into $L$ segments of contiguous updates. In each segment, we allow exactly $T$ words to be loaded/stored, except for possibly the last segment which may have $\leq T$ words.

We fix our attention to a single segment. Let $V$ be the set of indices of updates computed during the current segment. $V$ contains tuples $(i_j) \in \mathbb{Z}^7$. Then $\phi_I(V)$ is the set of indices of Input whose data must be accessed during the segment, $\phi_F(V)$ the set of indices of Filter, and $\phi_O(V)$ the set of indices of Output. We have at most $M$ words of data in fast memory before the segment begins, and may load at most $T$ more during the segment. The number of words we access from the $j$th array during the segment is $p_j|\phi_j(V)|$. Then the number of words we access during this segment is 
\[p_I|\phi_I(V)| + p_F|\phi_F(V)| + p_O|\phi_O(V)| \leq M + T.\]
Let $s_I$, $s_F$, and $s_O$ satisfy $1 \leq s_I + s_F$, $1 \leq s_I + s_O$, $1 \leq s_F + s_O$, and $2 \leq s_I + s_F + s_O$. The discussion in Section \ref{hbl} suggests that we require $s_I + s_F + s_O = 2$ in order to obtain the best asymptotic lower bound. Then by the HBL inequality proved above, we find
\[ |V| \leq |\phi_I(V)|^{s_I}|\phi_F(V)|^{s_F}|\phi_O(V)|^{s_O}.\]

Let $v_j := 2p_j|\phi_j(V)|/(M + T)$. The inequality becomes 
\[|V| \leq \frac{(M + T)^2}{4}(v_I/p_I)^{s_I}(v_F/p_F)^{s_F}(v_O/p_O)^{s_O}.\]
The number of updates $|V|$ possible during this segment is bounded by
$C\max v_I^{s_I}v_F^{s_F}v_O^{s_O}$ subject to the constraint $v_1 + v_2 + v_3 \leq 2$.

We assume $v_I + v_F + v_O = 2$ and apply Lagrange multipliers:
\begin{align*}
 v_I + v_F + v_O &= 2 \\
 s_Iv_I^{s_I - 1}v_F^{s_F}v_O^{s_O} = s_Fv_F^{s_F - 1}v_I^{s_I}v_O^{s_O}= s_Ov_O^{s_O - 1}v_I^{s_I}v_F^{s_F} &= \lambda
\end{align*}
Taking an inner product with $(v_I, v_F, v_O)$ we find:
\[
 (s_I + s_F + s_O)v_I^{s_I}v_F^{s_F}v_O^{s_O} = \lambda(v_I + v_F + v_O)
 \quad \implies \quad 
 v_I^{s_I}v_F^{s_F}v_O^{s_O} = \lambda
\]
since $s_I + s_F + s_O = 2 = v_I + v_F + v_O$. Substituting into each equation and dividing, we find $s_I = v_I$, $s_F = v_F$, $s_O = v_O$.
Then we have shown that the maximum number of updates during the current segment is 
\[|V| \leq \frac{1}{4}(M + T)^2(s_I/p_I)^{s_I}(s_F/p_F)^{s_F}(s_O/p_O)^{s_O}\]

This holds for all triples $(s_j)$ with $s_I + s_F + s_O = 2$ and $s_I, s_F, s_O \leq 1$. In particular, it holds for the triple $(s_j)$ which minimize the right hand side. We apply Lagrange multipliers again ignoring the last three constraints on the $s_j$:
\begin{align*}
 s_I + s_F + s_O &= 2 & \\
 (1 + \log(s_I/p_I))(s_I/p_I)^{s_I}(s_F/p_F)^{s_F}(s_O/p_O)^{s_O} &= \lambda & \\
 (1 + \log(s_F/p_F))(s_I/p_I)^{s_I}(s_F/p_F)^{s_F}(s_O/p_O)^{s_O} &= \lambda & \\
 (1 + \log(s_O/p_O))(s_I/p_I)^{s_I}(s_F/p_F)^{s_F}(s_O/p_O)^{s_O} &= \lambda & 
\end{align*}
Equating and dividing by $(s_I/p_I)^{s_I}(s_F/p_F)^{s_F}(s_O/p_O)^{s_O}$ we find $s_I/p_I = s_F/p_F = s_O/p_O$. This leads to $s_j = 2p_j/(p_I + p_F + p_O)$. Note that these minimizers always satisfy $s_j \leq 1$ for all $j$. Indeed, the triangle condition guarantees $2p_j \leq p_I + p_F + p_O$ for all $j$. Then we have shown that the maximum number of computations during any segment is 
\begin{align*}
 |V| &\leq \frac{1}{4}(M + T)^2(s_I/p_I)^{s_I}(s_F/p_F)^{s_F}(s_O/p_O)^{s_O} \\
 &= \frac{1}{4}(M + T)^2\left(\frac{2}{p_I + p_F + p_O}\right)^{2} = \frac{(M + T)^2}{(p_I + p_F + p_O)^2}
\end{align*}
Since we must do $G$ updates in total, the total number of segments is bounded below:
\[
 L \geq \left\lfloor \frac{G}{|V|} \right\rfloor \geq \frac{(p_I + p_F + p_O)^2G}{(M + T)^2} - 1
\]

Each segment besides the last has $T$ loads/stores, so the total number of words moved is:
\[
 X \geq T\left(\frac{(p_I + p_F + p_O)^2G}{(M + T)^2} - 1\right) = \frac{(p_I + p_F + p_O)^2TG}{(M + T)^2} - T
\]
To choose optimal segment length, we note that $T / (M + T)^2$ is maximized when $T = M$ and we find the following lower bound on the communication cost:
\[
 X \geq \frac{(p_I + p_F + p_O)^2G}{4M} - M\qedhere
\]
\end{proof}

Should the triangle condition fail, we slightly modify the last proof by finding a valid set of minimizers. Note that only one of the three constraints may fail at once: if $p_j > p_k + p_\ell$, then $p_k + p_j > p_\ell$.

\begin{lemma}
\label{lem:singleLFPrecisionNoTriangle}
With the setup in Theorem \ref{thm:singleBoundPrecision}, if $p_j > p_k + p_\ell$ for some distinct $j, k, \ell \in \{I, F, O\}$, the number of words communicated $X$ satisfies \[X \geq \frac{p_j(p_k + p_\ell)G}{M} - M\]
\end{lemma}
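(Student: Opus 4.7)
My plan is to mimic the structure of the proof of Lemma \ref{lem:singleLFPrecisionTriangle} as closely as possible, diverging only in the final optimization step over the HBL exponents. I would again split the execution into $L$ segments, each containing exactly $T$ loads/stores (apart from possibly the last). For a fixed segment with update-index set $V$, the same accounting gives
\[ p_I|\phi_I(V)| + p_F|\phi_F(V)| + p_O|\phi_O(V)| \leq M + T, \]
and the HBL inequality derived from the table in Section~\ref{serialbounds} gives $|V| \leq |\phi_I(V)|^{s_I}|\phi_F(V)|^{s_F}|\phi_O(V)|^{s_O}$ for any $(s_I,s_F,s_O) \in [0,1]^3$ satisfying the four constraints $s_I+s_F \geq 1$, $s_I+s_O \geq 1$, $s_F+s_O \geq 1$, $s_I+s_F+s_O \geq 2$. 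Substituting $v_j = 2p_j|\phi_j(V)|/(M+T)$ and maximizing $\prod v_j^{s_j}$ over $\sum v_j \leq 2$ with $\sum s_j = 2$ once again yields $v_j = s_j$ at the optimum, giving
\[ |V| \leq \tfrac{1}{4}(M+T)^2 (s_I/p_I)^{s_I}(s_F/p_F)^{s_F}(s_O/p_O)^{s_O}. \]

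The key difference is the outer minimization over admissible $(s_I,s_F,s_O)$. The unconstrained minimizer found in the previous lemma, $s_j = 2p_j/p_T$, now lies outside the feasible region precisely because $p_j > p_k + p_\ell$ forces $2p_j/p_T > 1$. Since only one such violation can occur simultaneously (as noted before the lemma statement), the constrained minimum lies on the boundary face $s_j = 1$. I would fix $s_j = 1$ and then minimize
\[ (1/p_j)(s_k/p_k)^{s_k}(s_\ell/p_\ell)^{s_\ell} \]
over $s_k + s_\ell = 1$, $s_k,s_\ell \in [0,1]$. Lagrange multipliers (entirely analogous to the triangle-case calculation) equate $1 + \log(s_k/p_k) = 1 + \log(s_\ell/p_\ell)$, yielding $s_k/p_k = s_\ell/p_\ell$, hence $s_k = p_k/(p_k+p_\ell)$ and $s_\ell = p_\ell/(p_k+p_\ell)$. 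Substituting back collapses the product to $1/(p_k+p_\ell)$, so
\[ |V| \leq \frac{(M+T)^2}{4\,p_j(p_k+p_\ell)}. \]

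The remaining derivation is identical to the triangle case: the number of segments satisfies $L \geq G/|V| - 1$, so
\[ X \geq T L \geq \frac{4p_j(p_k+p_\ell)TG}{(M+T)^2} - T, \]
and choosing $T = M$ to maximize $T/(M+T)^2$ yields the claimed bound $X \geq p_j(p_k+p_\ell)G/M - M$.

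I expect the main subtlety to be verifying that the optimum really does move to the boundary face $s_j = 1$ rather than to some other corner or edge of the feasible polytope — in particular, confirming that, once $s_j$ is fixed at $1$, the two-variable Lagrange critical point with $s_k + s_\ell = 1$ does satisfy $s_k,s_\ell \in [0,1]$ and the remaining HBL inequalities $s_j + s_k \geq 1$, $s_j + s_\ell \geq 1$, $s_k + s_\ell \geq 1$ (the first two are trivial once $s_j = 1$, and the third holds with equality). Verifying that this is indeed a minimum and not merely a critical point requires a short convexity observation on the log of the objective, which is straightforward but worth writing down carefully.
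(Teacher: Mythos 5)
Your proposal is correct and follows essentially the same route as the paper's proof: prescribe $s_j = 1$ with $s_k + s_\ell = 1$, optimize to get $s_k = p_k/(p_k+p_\ell)$ and $s_\ell = p_\ell/(p_k+p_\ell)$, and convert to the bound with $T = M$ exactly as in the triangle case. The only difference is that you worry about verifying $s_j=1$ is the true constrained minimizer, which is unnecessary for validity — any feasible exponent tuple yields a valid lower bound, and optimality only affects tightness.
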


\begin{proof}
The proof is the same as the proof of Lemma \ref{lem:singleLFPrecisionTriangle}, except now we prescribe $s_j = 1$ and $s_k + s_\ell = 1$. This guarantees that all conditions for HBL are met. We maximize $(s_k / p_k)^{s_k}(s_\ell/p_\ell)^{s_\ell}$ with respect to $s_k$ and $s_\ell$ as before, and find $s_k/p_k = s_\ell/p_\ell$. This leads to $s_k = p_k/(p_k + p_\ell)$ and $s_\ell = p_\ell / (p_k + p_\ell)$. All constraints are satisfied.
Pick $T = M$.  Then we have shown that the maximum number of computations during any segment is 
\begin{align*}
 |V| &\leq \frac{1}{4}(2M)^2(s_j/p_j)^{s_j}(s_k/p_k)^{s_k}(s_\ell/p_\ell)^{s_\ell} \\
 &= \frac{1}{4}(2M)^2 \frac{1}{p_j} \left(\frac{1}{p_k + p_\ell}\right)^{s_j + s_k} = \frac{(M + T)^2}{4p_j(p_k + p_\ell)}
\end{align*}
We convert this into a communication bound as in Lemma \ref{lem:singleLFPrecisionTriangle}:
\[
 X \geq \frac{p_j(p_k + p_\ell)G}{M} - M\qedhere
\]
\end{proof}


When $M > (C_pG)^{1/2}$, the previous bounds become trivial. When the filter size $w_Fh_F$ is small relative to $M$, we are able to reduce the decay in our bounds from $1/M$ to $1/M^{1/2}$ and extend them to larger memory sizes. To show this third ``small filter" bound, we rewrite the problem and exploit new array access homomorphisms. In particular, we rewrite the loops over $i_6$ and $i_7$ as loops over $q_6, r_6, q_7, r_7$. We have $i_6 = \sigma_w q_6 + r_6$ for $r_6 \in [0, \sigma_w - 1]$ and $q_6 \in [0, w_F/\sigma_w - 1]$, and we similarly divide $i_7$ by $\sigma_h$ for $q_7$ and $r_7$. This has the effect of lifting $Input$ and $Filter$ to higher dimensional arrays, with 6 indices instead of 4.
Under the lift, we make the following accesses to each array during an update:
\begin{align*}
&\mathrm{Input}(i_1, i_2, i_4 + q_6, r_6, i_5 + q_7, r_7) & \\
&\mathrm{Filter}(i_2, i_3, q_6, r_6, q_7, r_7) & \\
&\mathrm{Output}(i_1, i_3, i_4, i_5) &
\end{align*}
In our proof, we will find it valuable to fix the indices $q_6$ and $q_7$.  A new collection of array access homomorphisms ignores these. With an implicit translation by $\vec{q} = (q_6, q_7)$, we define the homomorphisms $\phi_I', \phi_O': \mathbb{Z}^7 \to \mathbb{Z}^4$, $\phi_F' : \mathbb{Z}^7 \to \mathbb{Z}^6$:
\begin{align*}
\phi_I'(i_1, i_2, i_3, i_4, i_5, r_6, r_7) &= (i_1, i_2, i_4, r_6, i_5, r_7) & \\
\phi_F'(i_1, i_2, i_3, i_4, i_5, r_6, r_7) &= (i_2, i_3, r_6, r_7) & \\
\phi_O'(i_1, i_2, i_3, i_4, i_5, r_6, r_7) &= (i_1, i_3, i_4, i_5) &
\end{align*}
Each homomorphism selects a subset of the indices, and every index appears in exactly two of the homomorphisms. This is the case of a tensor contraction analyzed in Section 6.3 of \cite{cdksy13}. They find the optimal HBL exponents to be $s_I = s_F = S_O = 1/2$ and an HBL inequality for finite subsets $V$ of $\mathbb{Z}^7$,
\[ |V| \leq |\phi_I'(V)|^{1/2}|\phi_F'(V)|^{1/2}|\phi_O'(V)|^{1/2}.
\]
We can now begin the proof of the third bound.

\begin{lemma}
 \label{lem:singleSFPrecision}
 The number of words communicated $X$ satisfies \[X \geq \frac{2(p_Ip_Fp_O)^{1/2}(\sigma_w\sigma_h)^{1/2}G }{(w_Fh_FM)^{1/2}} - 2M\]
 \end{lemma}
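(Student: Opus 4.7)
The plan is to reuse the segment-based framework of the proof of Lemma~\ref{lem:singleLFPrecisionTriangle} but drive it through the tensor-contraction HBL inequality furnished by $(\phi_I', \phi_F', \phi_O')$ rather than through the original four-index homomorphisms. Partition the execution into maximal segments of exactly $T$ loads/stores each, with a possibly shorter trailing segment. At the start of a segment the cache holds at most $M$ words, so during the segment the processor touches at most $M + T$ words of data; writing $A_{tot}$, $B_{tot}$, $C_{tot}$ for the distinct Input, Filter, and Output memory locations accessed, this gives the budget constraint $p_I A_{tot} + p_F B_{tot} + p_O C_{tot} \leq M + T$.

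Within a single segment, let $V$ be the set of iterations computed and decompose $V = \bigsqcup_j V_j$ according to the $N := (w_F/\sigma_w)(h_F/\sigma_h)$ values of $(q_6, q_7)$. For each $j$, the tensor-contraction HBL inequality from the excerpt gives $|V_j| \leq (\alpha_j \beta_j \gamma_j)^{1/2}$, where I abbreviate $\alpha_j = |\phi_I'(V_j)|$, $\beta_j = |\phi_F'(V_j)|$, $\gamma_j = |\phi_O'(V_j)|$. The critical observation is that distinct $(q_6, q_7)$ produce \emph{disjoint} sets of Filter accesses, since $(q_6, r_6) \mapsto \sigma_w q_6 + r_6$ embeds bijectively into $[0, w_F - 1]$ (and similarly in the height coordinate); this gives $\sum_j \beta_j = B_{tot}$ exactly, whereas $\alpha_j \leq A_{tot}$ and $\gamma_j \leq C_{tot}$ hold only as inequalities because several $(q_6, q_7)$ may share Input or Output locations. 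Summing and applying Cauchy--Schwarz to $\sum_j \beta_j^{1/2} \leq \sqrt{N \sum_j \beta_j}$ yields the per-segment bound
\[
|V| = \sum_j |V_j| \;\leq\; (A_{tot} C_{tot})^{1/2} \sum_j \beta_j^{1/2} \;\leq\; \sqrt{N\, A_{tot}\, B_{tot}\, C_{tot}}.
\]

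To finish, maximize $A_{tot} B_{tot} C_{tot}$ subject to the budget constraint via AM--GM (equivalently Lagrange multipliers with $p_I A_{tot} = p_F B_{tot} = p_O C_{tot} = (M+T)/3$), giving $A_{tot} B_{tot} C_{tot} \leq (M + T)^3 / (27 p_I p_F p_O)$ and hence
\[
|V|_{\max} \;\leq\; \frac{\sqrt{N}\,(M+T)^{3/2}}{3\sqrt{3}\,(p_I p_F p_O)^{1/2}}.
\]
The number of segments is at least $G/|V|_{\max}$, so $X \geq T G / |V|_{\max} - T$. Optimizing the factor $T/(M+T)^{3/2}$ over $T$ gives the choice $T = 2M$, evaluating to $2/(3\sqrt{3}\, M^{1/2})$, and substituting $N = w_F h_F/(\sigma_w \sigma_h)$ produces exactly the claimed inequality. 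The main obstacle is the aggregation step across the $N$ sub-problems: the tensor-contraction HBL inequality only applies once $(q_6, q_7)$ is fixed, and naively bounding each of the $N$ sub-problems by the whole budget would lose a factor of $N$ rather than $\sqrt{N}$. The Filter-disjointness observation is precisely what lets Cauchy--Schwarz absorb the sum at cost only $\sqrt{N}$, and this $\sqrt{N}$ is the origin of the $(w_F h_F/(\sigma_w \sigma_h))^{1/2}$ factor in the final bound.
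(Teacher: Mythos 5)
Your proposal is correct and follows essentially the same route as the paper's proof: the same segment decomposition with $T$ loads/stores, the same slicing of $V$ by $(q_6,q_7)$, the same per-slice tensor-contraction HBL inequality, the same key observation that distinct $\vec{q}$ slices touch disjoint Filter entries, and the same choice $T=2M$. The only difference is that you carry out the per-segment maximization by Cauchy--Schwarz followed by AM--GM where the paper uses a single Lagrange-multiplier computation over $u$, $v$, and the $w(\vec{q})$; these yield the identical optimum (Cauchy--Schwarz is tight exactly when all $\beta_j$ are equal, which is the paper's stationary point), and your version has the minor advantage of certifying the bound by genuine inequalities rather than a first-order condition.
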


 \begin{proof}
 We split the 7NL CNN computation into $L$ segments with $T$ loads/stores as before.
 Let $V$ be the set of updates computed during a given segment. For fixed indices $\vec{q} = (q_6, q_7)$, let $V(\vec{q})$ be the slice of $V$ with those two coordinates held constant: $V(\vec{q}) = \pi^{-1}(\vec{q})$ where $\pi$ is the projection of $\mathbb{Z}^9$ onto the $\vec{q}$ coordinates.
 Then to compute every update in $V(\vec{q})$ we must access the entries of Input corresponding to indices $\phi_I'(V(\vec{q}))$, and similarly for Filter and Output (we embed $V(\vec{q})$ in the domain of $\phi_I'$ by ignoring the constant $\vec{q}$ coordinates). We apply our HBL inequality to $V(\vec{q})$,
 \[
   |V(\vec{q})| \leq |\phi_I'(V(\vec{q}))|^{1/2}|\phi_F'(V(\vec{q}))|^{1/2}|\phi_O'(V(\vec{q}))|^{1/2}
 \]
 Note that $V$ is the disjoint union of the $V(\vec{q})$'s. Also, the set of indices of $Filter$ accessed is the disjoint union of the $\phi'_F(V(\vec{q}))$'s. Let $u$ be the number of indices of $Input$ accessed during the segment, $v$ the number of indices of $Output$ accessed during the segment, and $w(\vec{q}) = |\phi_F'(V(\vec{q}))|$ the number of indices of Filter accessed by each slice. We have:
 \begin{align*}
 |\phi'_I(V(\vec{q}))| &\leq u,\quad \abs{\phi'_O(V(\vec{q}))}\leq v \qquad \forall \vec{q} & \\
 \Big| {\textstyle\bigcup}_{\vec{q}} V(\vec{q}) \} \Big| &= {\textstyle\sum}_{\vec{q}} w(\vec{q}) & 
 \end{align*} 
 
 We have at most $M$ words in memory before the segment begins, and may load at most $T$ more:
 \[p_I u + p_O v + p_F {\textstyle\sum}_{\vec{q}} w(\vec{q}) \leq M + T\]
 
 Using our HBL inequality,
 \begin{align*}
 |V| &= {\textstyle\sum}_{\vec{q}} \left|V(\vec{q})\right| \leq u^{1/2}v^{1/2}{\textstyle\sum}_{\vec{q}} w(\vec{q})^{1/2} 
 \end{align*}
 The max of $u^{1/2}v^{1/2} \sum_{\vec{q}} w(\vec{q})^{1/2}$ over $p_I u + p_O v + p_F\sum_{\vec{q}} w(\vec{q}) \leq M + T$ bounds the number of updates during our segment. We assume equality and apply Lagrange multipliers:
 \begin{align}
   p_I u + p_O v + p_F{\textstyle\sum}_{\vec{q}} w(\vec{q}) &= M + T & \\
   \frac{1}{2}\frac{v^{1/2}}{u^{1/2}}{\textstyle\sum}_{\vec{q}} w(\vec{q})^{1/2} &= p_I \lambda & \\ 
   \frac{1}{2}\frac{u^{1/2}}{v^{1/2}}{\textstyle\sum}_{\vec{q}} w(\vec{q})^{1/2} &= p_O \lambda & \\
   \frac{1}{2}u^{1/2} v^{1/2} w(\vec{q})^{-1/2} &= p_F \lambda \qquad \forall \vec{q}& 
 \end{align}
 Then by dividing (3) and (4), $p_Iu = p_Ov$.  Equating instances of (5), all the $w(\vec{q}) =: w$ are equal. There are $\frac{w_Fh_F}{\sigma_w\sigma_h}$ pairs of $(\vec{q})$, so equating (3) and (5), \[\frac{1}{p_I}{\textstyle\sum}_{\vec{q}} w(\vec{q})^{1/2} = \frac{1}{p_I}\frac{w_Fh_F}{\sigma_w\sigma_h} w^{1/2} = \frac{u}{p_F w^{1/2}} \] so that $p_F w = \frac{\sigma_w\sigma_h}{w_Fh_F}p_Iu$. Then by (1), the maximizing values are \\
 $u =\frac{M + T}{3p_I}$, $v = \frac{M +T}{3p_O}$, and $w(\vec{q}) = \frac{\sigma_w\sigma_h}{w_Fh_F}\frac{M + T}{3p_F}$ for all $\vec{q}$.
 Using these values, the maximum number of updates during this segment is 
 \begin{align*}
   |V| &\leq u^{1/2}v^{1/2} {\textstyle\sum}_{\vec{q}} w(\vec{q})^{1/2}
   \leq \frac{(M + T)^{3/2}}{3^{3/2}(p_Ip_Fp_O)^{1/2}}\frac{(w_Fh_F)^{1/2}}{(\sigma_w\sigma_h)^{1/2}}
 \end{align*}
 and the number of segments $L$ is bounded below by
 \[
   L \geq \left\lfloor \frac{G}{|V|} \right\rfloor \geq \frac{3^{3/2} (p_Ip_Fp_O)^{1/2} (\sigma_w\sigma_h)^{1/2} G }{(w_Fh_F)^{1/2}(M + T)^{3/2}} - 1
 \]
 Each segment besides the last has at most $T$ loads/stores, so the communication cost is
 \[
   X \geq \frac{3^{3/2} (p_Ip_Fp_O)^{1/2} (\sigma_w\sigma_h)^{1/2} TG }{(w_Fh_F)^{1/2}(M + T)^{3/2}} - T
 \]
 To choose optimal segment length, we note that $T / (M + T)^{3/2}$ is maximized when $T = 2M$ and we find the communication cost
 \[
   X \geq \frac{2(p_Ip_Fp_O)^{1/2}(\sigma_w\sigma_h)^{1/2}G }{(w_Fh_FM)^{1/2}} - 2M\qedhere
 \]
 \end{proof}

Taken together, Lemmas \ref{lem:singleTrivPrecision}, \ref{lem:singleLFPrecisionTriangle}, \ref{lem:singleLFPrecisionNoTriangle}, and \ref{lem:singleSFPrecision} complete the proof of Theorem \ref{thm:singleBoundPrecision}.

\subsection{Attainability}\label{singleattainability}

We now discuss practical algorithms for attaining the previously presented communication bounds. We will focus on four algorithms in particular: im2col, blocking, Winograd convolutions, and FFT convolutions. im2col \cite{juan2020high}, Winograd \cite{meng2019efficient}, and FFT techniques \cite{zjld19} for performing convolutions are all well documented in the literature. We will focus on designing improved blocking algorithms. For loop bounds $(N, c_I, c_O, w_O, h_O, w_F, h_F)$ we call \[B=(b_N, b_{c_I},b_{c_O},b_{w_O},b_{h_O},b_{w_F'},b_{h_F'},b_{w_F''},b_{h_F'})\] a blocking. Note that we are using a small filter trick in the style of \cite{dr16}. To obtain a communication-optimal blocking, we use a linear program. For each bound variable, we have a lower bound of $1$ and an appropriate upper bound, being the corresponding array sizes for most indices, and an expression of the filter size and strides for the filter indices $q_6,q_7,r_6$, and $r_7$. Additionally, we know that the three blocks of the output, image, and filter must all simultaneously fit in memory. We thus have that \begin{align}p_Ob_Nb_{c_O}b_{w_O}b_{h_O}&\leq\frac{p_OM}{p_T}\nonumber\\p_Fb_{c_I}b_{c_O}b_{w_{F'}}b_{w_{F''}}b_{h_{F'}}b_{h_{F''}}&\leq\frac{p_FM}{p_T}\label{opt_prob}\\p_Ib_Nb_{c_I}(b_{w_O}+b_{w_{F'}})(b_{h_O}+b_{h_{F'}})b_{w_{F''}}b_{h_{F''}}&\leq\frac{p_IM}{p_T}\nonumber\end{align} 
We expand the last term into four terms, each bounded by $\frac{M}{12}$. Then, taking logarithms and setting up the linear program, we have for our vector $x=\log B$ elementwise, the problem of maximizing $c^Tx$ where \[c^T=\begin{bmatrix}1&1&1&1&1&1&1&1&1\end{bmatrix}\] subject to the constraints $Ax\leq b$ where \[A=\begin{bmatrix}1&0&1&1&1&0&0&0&0\\
0&1&1&0&0&1&1&1&1\\
1&1&0&1&1&0&1&0&1\\
1&1&0&1&0&0&1&1&1\\
1&1&0&0&1&1&1&0&1\\
1&1&0&0&0&1&1&1&1\end{bmatrix} \textrm{ and } 
b=\begin{bmatrix}1-\log_Mp_T\\1-\log_Mp_T\\1-\log_M4p_T\\1-\log_M4p_T\\1-\log_M4p_T\\1-\log_M4p_T\end{bmatrix}\] We solve this linear program and take exponentials to find our blocking. Using a linear program, we can asymptotically meet the lower bounds we have derived. To compare the various algorithms for performing the convolution, we symbolically calculate the amount of communication each one requires. We use the FFT communication bound provided in \cite{elango16} and the matrix multiplication communication bound provided in \cite{2019rbpg} to compute the relevant communication volumes. We compute communication volumes using parameters taken from AlexNet. We compare these four with the bounds derived in \ref{serialbounds} and the communication for a naive convolution. The parameters used are taken from \cite{resnet}. The results are presented in Figure 1. 

\begin{figure}
\centering
\includegraphics[width=8.5cm]{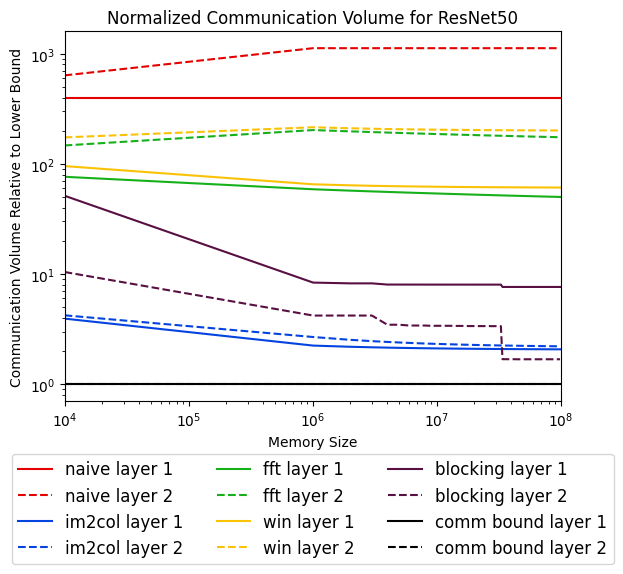}
\caption{Theoretically computed communication volumes for mixed precision ResNet50 layers 1 and 2, relative to the communication bound. We take $\sigma_I=\sigma_F=1$ and $\sigma_O=2$. Layer 1 refers to conv1 and layer 2 refers to conv2\_x as specified in \cite{resnet}. We use a batch size of $1000$. We see that in general, communication volumes are a constant multiple of the communication bound. However, we do see scaling in blocking, and for conv2\_x, the strides of 1 are more favorable to the blocking, and blocking beats im2col for sufficiently large memory sizes. Convolutional layers conv3\_x, conv4\_x, and conv5\_x, which are not depicted, resemble conv2\_x. }
\label{fig:serial1}
\end{figure}

We observe several trends. Blocking and im2col scale better than FFT and Winograd in the memory size, and the relative performance of blocking and im2col is dependent on the ratio $\frac{\sigma_w\sigma_h}{w_Fh_F}$. This is expected because of how the small filter blocking is used. We note that in all cases, the communication bound is not attained precisely. Work remains to either strengthen the communication bound or devise better algorithms to meet the bound.

\section{Parallel Communication Bounds} \label{parallel}

\subsection{Derivation of New Bounds}
\label{parallelbounds}
In this section, we prove Theorems \ref{thm:parallelBoundMDPrecision} and \ref{thm:parallelBoundMIPrecision}. These provide lower bounds for the number of words communicated in a distributed memory parallel processor memory model, with $P$ processors each with local memory size $M$. We assume that all the data is non-overlapping. A single word of communication corresponds to the transmission of one word of data from any one processor to any other. Each array has its own precision.

We now prove Theorem \ref{thm:parallelBoundMDPrecision}. For this result, the input data may begin distributed among the local memories in any configuration and the output data may reside anywhere in memory at the end of the execution. In the case $p_I = p_F = p_O = 1$, Theorem \ref{thm:parallelBoundMDPrecision} becomes
\[
 X \geq \max\left\{\frac{9G}{4PM} - M, \frac{2G(\sigma_w\sigma_h)^{1/2}}{P(w_Fh_FM)^{1/2}} - 2M\right\}.
\]

\begin{proof}[Proof of Theorem \ref{thm:parallelBoundMDPrecision}]
Some processor must perform at least $G / P$ of the updates. Splitting the computations executed by this processor into segments, each having a total of $M$ words communicated into and out of the processor, we need to bound the number of computations possible in a segment. By the same technique as in Lemmas \ref{lem:singleLFPrecisionTriangle} and \ref{lem:singleLFPrecisionNoTriangle}, we find that the number of calculations $|V|$ possible in a segment is $|V| \leq M^2 / C_p$. 
Since we must do at least $G / P$ updates, the total number of segments is bounded below:
\[
 L \geq \frac{G}{P|V|} - 1 \geq \frac{C_pG}{PM^2} - 1
\]
and the number of words communicated by this processor is
\[
 X \geq ML \geq \frac{C_pG}{PM} - M
\]

This is the first term in the desired lower bound. Now, instead splitting into segments with $2M$ communications and using the technique in Lemma \ref{lem:singleSFPrecision}, we find that the number of calculations $|V|$ possible in a segment is
\[
 |V| \leq \frac{M^{3/2}}{(p_Ip_Fp_O)^{1/2}} \left(\frac{w_Fh_F}{\sigma_w\sigma_h}\right)^{1/2}
\]
Since we must do at least $G / P$ updates, and as before the number of words communicated by this processor is
\[
 X \geq 2M\left( \frac{G}{P|V|} - 1 \right) \geq \frac{2(p_Ip_Fp_O)^{1/2}(\sigma_w\sigma_h)^{1/2}G}{P(w_Fh_FM)^{1/2}} - 2M
\]

This is the second term in the desired lower bound.
\end{proof}

When $M > \frac{3G^{1/2}}{2P^{1/2}}$ and $M > \frac{G^{2/3}(\sigma_w\sigma_h)^{1/3}}{P^{2/3}(w_Fh_F)^{1/3}}$, both of the above lower bounds are trivial. This becomes a concern if the memory size per processor or the number of processors is large relative to the size of the computation. Taking inspiration from the methods in \cite{de13} which introduce lower bounds for parallel matrix multiplication corresponding to 2.5D algorithms, we find memory independent lower bounds in Theorem \ref{thm:parallelBoundMIPrecision}. Now we make a load balancing assumption on each of the three arrays: image, filter, and output data are all evenly distributed across the processors. When $p_I = p_F = p_O = 1$, the bound is
\[
 X \geq \max\left\{\frac{G^{1/2}}{P^{1/2}} - \frac{A_p}{P},
 \frac{(G\sigma_w\sigma_h)^{2/3}}{(Pw_Fh_F)^{2/3}} - \frac{A_p}{P}\right\}.
\]

\begin{proof}
 Recall that in Lemma \ref{lem:singleLFPrecisionTriangle} we show that if $V$ is any subset of indices of updates of 7NL CNN, then we can bound the size of $V$:
 \begin{align*} 
   |V| &\leq |\phi_I(V)|^{2/3}|\phi_F(V)|^{2/3}|\phi_O(V)|^{2/3} \\
   &= \frac{1}{(p_Ip_Fp_O)^{2/3}}(p_I|\phi_I(V)|)^{2/3}(p_F|\phi_F(V)|)^{2/3}(p_O|\phi_O(V)|)^{2/3} 
 \end{align*}
 Because the total number of updates is $G$, one processor must do at least $G/P$ updates. Let $V$ be the set of all iterations performed by this processor. Then for at least one $j \in \{I, F, O\}$, we must have
 \[ \frac{G^{1/3}}{P^{1/3}} = |V|^{1/3} \leq \frac{1}{(p_Ip_Fp_O)^{2/9}}(p_j|\phi_j(V)|)^{2/3}. \]
 So the processor must access at least $(p_Ip_Fp_O)^{1/3}G^{1/2}/P^{1/2}$ words from some array during the full computation. Recall that \\
 $A_p := \max\{p_I|I|, p_F|F|, p_O|O|\}$. At most $A_p/P$ words from this array are accessible to the processor at the beginning of the computation by the load balancing assumption, therefore this processor must receive at least $X$ words of data from other processors, where
 \[ X \geq \frac{(p_Ip_Fp_O)^{1/3}G^{1/2}}{P^{1/2}} - \frac{A_p}{P}. \]
 
 Similarly, in Lemma \ref{lem:singleSFPrecision} we show that $V$ has size:
 \begin{align*} 
   |V| &\leq {\textstyle\sum}_{\vec{q}}|\phi_I'(V(\vec{q}))|^{1/2}|\phi_F'(V(\vec{q}))|^{1/2}|\phi_O'(V(\vec{q}))|^{1/2} \\
   &\leq \frac{w_Fh_F}{\sigma_w \sigma_h}|\phi_I(V)|^{1/2}|\phi_F(V)|^{1/2}|\phi_O(V)|^{1/2}
 \end{align*}
 Then for at least one $j \in \{I, F, O\}$, it must be true that
 \[ \frac{G^{1/3}}{P^{1/3}} = |V|^{1/3} \leq \left(\frac{w_Fh_F}{\sigma_w\sigma_h}\right)^{1/3}\frac{1}{(p_Ip_Fp_O)^{1/6}}(p_j|\phi_j(V)|)^{1/2}. \]
 So the processor accesses $(p_Ip_Fp_O)^{1/3}(G\sigma_w\sigma_h)^{2/3}/(Pw_Fh_F)^{2/3}$ words from some array during the full computation. At most $A_p/P$ words from this array are accessible to the processor at the beginning of the computation, so the processor must receive at least $X$ words of data from other processors, where
 \[ X \geq \frac{(p_Ip_Fp_O)^{1/3}(G\sigma_w\sigma_h)^{2/3}}{(Pw_Fh_F)^{2/3}} - \frac{A_p}{P}.\]
 Combining the two lower bounds proves the theorem.
\end{proof}

Note that as in the single processor case, the lower bounds come in pairs, the second eclipsing the first when the filter is sufficiently small. We now discuss the attainability of these results.

\subsection{Attainability}
\label{parallelattainability}

We now discuss algorithms for attaining the previously presented communication bounds, focusing once again on im2col, blocking, Winograd, and FFT. For blocking, instead of blocking in the memory size, we block in the number of processors. For each loop variable, we have a corresponding parallel blocking variable $a_{(\cdot)}$ representing the segment of the loop variable being assigned to each processor. We then have the blocking \[B=(a_N,a_{c_I},a_{c_O},a_{w_O},a_{h_O},a_{w_F},a_{h_F})\] and each processor then does $a_Na_{c_I}a_{c_O}a_{w_O}a_{h_O}a_{w_F}a_{h_F}$ computations. We do not use an additional small filter blocking in this instance. To find the blocking, we once again take logarithms, giving us the following linear program: for a variable $x=\log\{B\}$ we maximize $c^Tx$ for \[c^T=\begin{bmatrix}1&1&1&1&1&1&1\end{bmatrix}\] subject to $Ax\leq b$ where \[A=\begin{bmatrix}-1&0&-1&-1&-1&0&0\\0&-1&-1&0&0&-1&-1\\-1&-1&0&-1&-1&0&0\\-1&-1&0&-1&0&0&-1\\-1&-1&0&0&-1&-1&0\\-1&-1&0&0&0&-1&-1\\-1&-1&-1&-1&-1&-1&-1\end{bmatrix}\] and 
\[b=\begin{bmatrix}1-\log_Pp_T-\log_PNc_Ow_Fh_F
\\1-\log_Pp_T-\log_Pc_Ic_Ow_Oh_O
\\1-\log_P4p_T-\log_PNc_Iw_Fh_F
\\1-\log_P4p_T-\log_PNc_Iw_Fh_O
\\1-\log_P4p_T-\log_PNc_Ih_Cw_F
\\1-\log_P4p_T-\log_PNc_Iw_Oh_O
\\1-\log_PNc_Ic_Ow_Oh_Ow_Fh_F\end{bmatrix}\] 
To compare the four algorithms, we once again symbolically compute the amount of communication each one requires, and we compare it with the bounds given in \ref{parallelbounds}. One should note that the communication models used in the three different bounds for our work, \cite{2019rbpg} for matrix multiplication, and \cite{elango16} for FFT are not quite the same. However, there is a straightforward conversion between them. The difference between the memory models is the assumption as to whether the data initially resides outside of the distributed network or inside of it. To convert between these, we simply add or subtract the total size of the problem $\abs{\text{Image}}+\abs{\text{Filter}}+\abs{\text{Output}}$. The results are presented in Figure 2. 

\begin{figure}
\includegraphics[width=8.5cm]{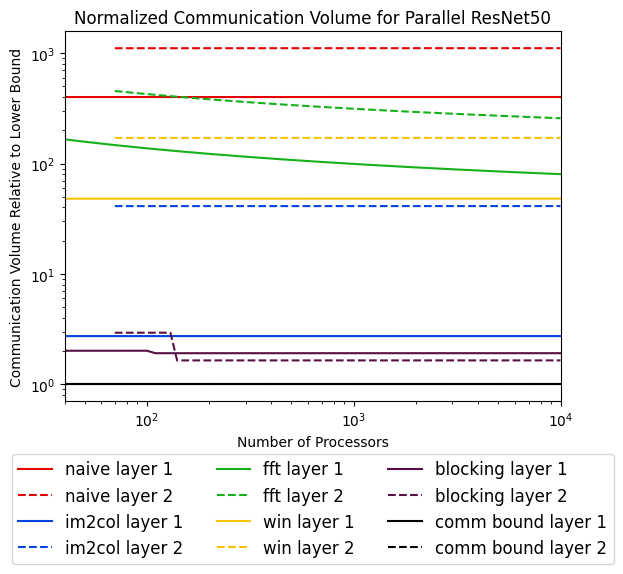}
\caption{Theoretically computed communication volumes for parallel ResNet layers 1 and 2, with $p_I=p_F=1$ and $p_O=2$, as a multiple of the communication bound, as the number of processors increases. Layer 1 refers to conv2 and layer 2 refers to conv2\_x. We use a batch size of $1000$. We see that blocking outperforms im2col considerably, especially for layer 2. Convolutional layers conv3\_x, conv4\_x, and conv5\_x, which are not depicted, resemble conv2\_x. The dashes lines for conv2\_x do not begin at the smallest number of processors because of assumptions on the memory model. We see that blocking performs better than im2col in almost all cases, with significant improvements in conv2\_x when $\sigma_x=\sigma_y=1$ is more favorable to blocking. }
\label{fig:parallel1}
\end{figure}

In both cases, we note that the communication bound goes to 0 very quickly as the number of processors increases. Additionally, we note that for blocking, we have the additional hypothesis that all of the inputs/filter/output elements can reside in the distributed memory, so this method of blocking is not immediately feasible for smaller numbers of processors. However, we see that when blocking is applicable, it rapidly reaches the communication bound as the number of processors increases. We note that Winograd and FFT remain quite far from the communication bound, and that FFT and Winograd have comparable performances, which is validated by the experimental results of \cite{zjld19}, while im2col performs orders of magnitude better. 

\section{Performance Results} \label{performance}
To show real-world applicability of this tiling, we benchmark our results on a GEMMINI \cite{gemmini} machine learning accelerator running on Firesim \cite{firesim}, a cycle-accurate hardware simulator.

GEMMINI's memory architecture consists of two separate memory buffers: a \emph{scratchpad}, which holds the input and the filter, and an \emph{accumulator} buffer, which holds the output at a higher precision (to prevent floating-point rounding issues) and performs additions to it. At each tile, the input and the filter are reloaded from off-chip memory, but the partially summed output is held in the accumulator until it is fully summed (the loop ordering is fixed to ensure that the innermost loop axes in the outer loops correspond to reduction axes), at which point it is rounded and written off-chip in low precision.

We use the default GEMMINI chip configuration, in which, the scratchpad is 256KiB, holding 8-bit words, while the accumulator is 64KiB and holds 32-bit words. However, memory accesses are interleaved with computation using \emph{double-buffering}, in which only half of the scratchpad and the accumulator are accessible to the processor at any one time (with the other half pulling in data from main memory). As a result, for tiling calculations, our memory sizes are halved: the scratchpad can hold 128K words, while the accumulator can hold 8K words.

As a result, we modify the optimization problem \eqref{opt_prob} to account for buffer sharing between the input and the filter and to enforce integral tile sizes. Although this introduces nonlinearity (and an integrality constraint), the built-in numerical optimization routine \texttt{NMaximize} on Mathematica is still able to find a tile in around 400 iterations, or about five seconds on our test laptop.

We compare the performance of the five standard ResNet convolution sizes \cite{resnet} evaluated on GEMMINI using both our tiling and the vendor-supplied tiling system included with GEMMINI. In the vendor implementation, each ResNet convolution size takes roughly the same number of cycles, roughly $500$M for batch size $1000$.

We measure both the estimated communication complexity (the number of scratchpad and accumulator rows allocated by chip's memory controller per tile, multiplied by the total number of operations divided by the size of a tile) and the counted number of clock cycles taken by each computation.
\begin{figure}
\includegraphics[width=8.5cm]{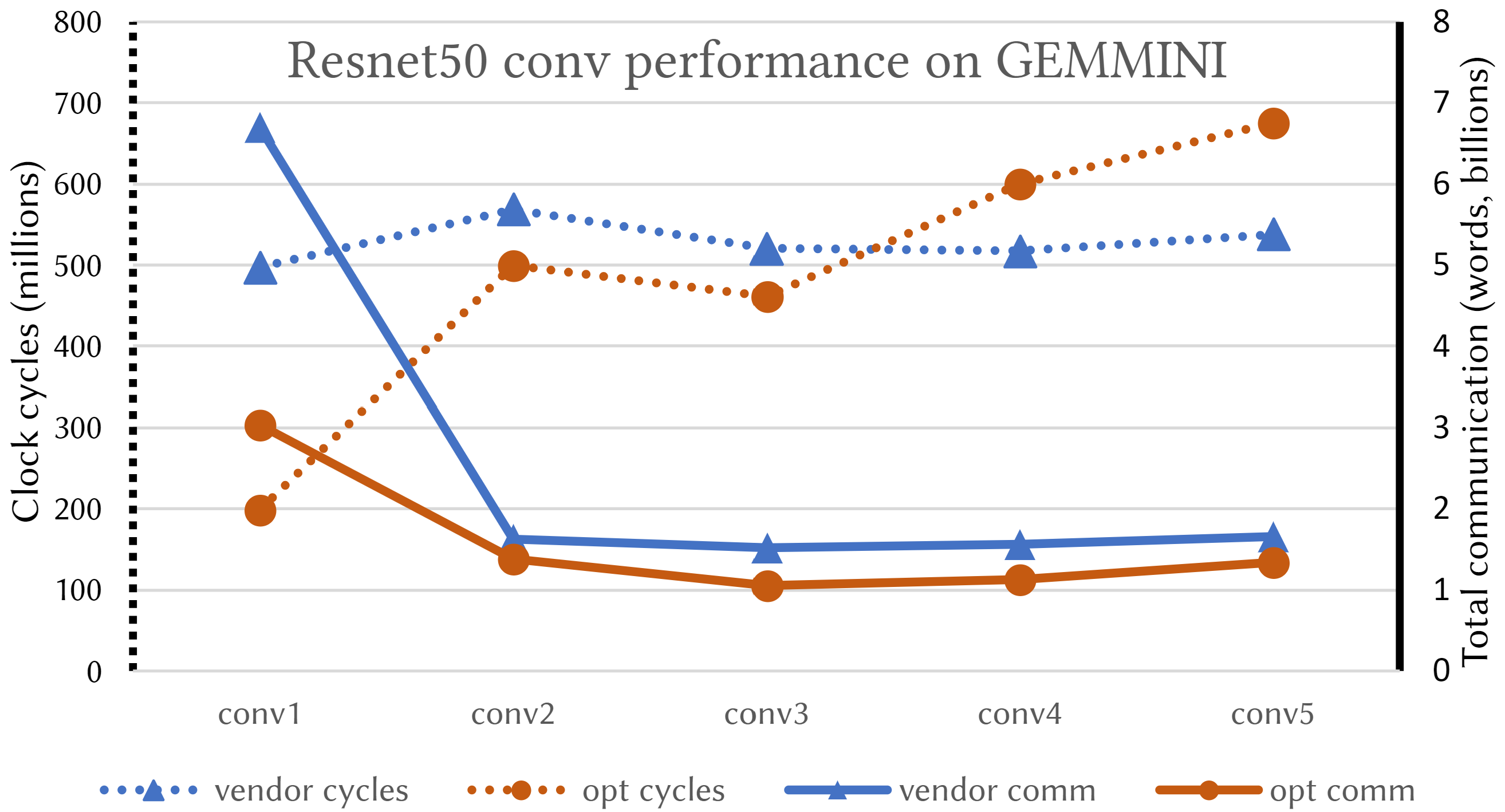}
\caption{Experimentally measured Resnet50 layer performance (both total clock cycles and communication) on GEMMINI accelerator. Our optimization-generated tiling consistently uses less communication than the vendor-supplied tiling, which leads to performance increases for layers where vendor tiling has poor scratchpad utilization (convs 1 through 3).}
\label{fig:gemmini}
\end{figure}
As shown in Figure \ref{fig:gemmini}, our system consistently uses between $45\%$ and $85\%$ as much estimated communication compared to the vendor tiling on all ResNet layers; this can be used as a proxy for energy consumption, which is dominated by communication costs (\cite{YGL+20interstellar} attributes over 80\% of energy costs to communication). Furthermore, for convs 1, 2, and 3 (together comprising roughly half of the workload of a standard ResNet50 instance) where the vendor tiling was unable to take full advantage of the buffer (indicated by low scratchpad utilization per-tile), our tiling reduces clock cycle count (i.e. runtime) by $2.5\times$ for conv1 and $13\%$ for conv2 and conv3. However, for layers 4 and 5, where the vendor tiling already achieves scratchpad utilizations of $99\%$ and $93\%$ respectively, there is little room for improvement; in these cases, our tiling, which does not take into account non-memory related, hardware-specific factors such as optimal microkernel size and memory coalescing, performs worse with respect to clock cycles. In such cases, additional constraints may be added to encode these factors, as in \cite{huang2021cosa}. For instance, for conv5, simply adding a single constraint forbidding the $7\times 7$ image from being tiled (as an entire row will fit in a line of scratchpad) reduces cycles count from $124\%$ to $104\%$ of the vendor figure.

\section{Conclusion} \label{conclusion}

In this work, we have reduced the gap between theoretical communication lower bounds and practical implementations for convolution layers of CNNs. We addressed both a single processor memory model with one cache layer, and a parallel processor distributed memory model. The lower bounds presented in Theorems \ref{thm:singleBoundPrecision}, \ref{thm:parallelBoundMDPrecision}, and \ref{thm:parallelBoundMIPrecision} contain constants and allow for the relative precisions of the data to vary. The single processor lower bound is asymptotically optimal. In Sections \ref{singleattainability} and \ref{parallelattainability} we analyzed the attainability of these lower bounds across popular convolution algorithms such as Im2Col, and found in some cases that a blocking strategy inspired by \cite{cdksy13} communicates less. We included results comparing the performance of our blocking strategy with Im2Col, FFT, and Winograd in Section \ref{performance}.

While our lower bounds are nearly attained, there is more work to be done to close the gap. Future directions of work include determining whether further optimization techniques can tighten the constants in the bounds, and investigating other algorithms to attempt to meet the existing lower bounds. It is also possible that pebbling methods could be used to remove lower order terms from our lower bounds. Finally, it is of interest to extend our results to other memory models, such as single processors with more levels of cache or parallel processors with shared memory.

\begin{acks}
This research was supported by the University of California, Berkeley College of Letters and Sciences Summer Undergraduate Research Fellowship Math Team Grant. We would like to thank Victoria Cheng and the rest of the SURF staff for their valuable support and the donors who made this grant possible. We also thank Hogli Zhao, Jon Hillery, Rahul Jain, and Evangelos Georganas for their insight. Hasan Genc was invaluable for support regarding GEMMINI. 
\end{acks}

\bibliographystyle{ACM-Reference-Format}
\bibliography{reference}


\begin{thebibliography}{17}


\ifx \showCODEN    \undefined \def \showCODEN     #1{\unskip}     \fi
\ifx \showDOI      \undefined \def \showDOI       #1{#1}\fi
\ifx \showISBNx    \undefined \def \showISBNx     #1{\unskip}     \fi
\ifx \showISBNxiii \undefined \def \showISBNxiii  #1{\unskip}     \fi
\ifx \showISSN     \undefined \def \showISSN      #1{\unskip}     \fi
\ifx \showLCCN     \undefined \def \showLCCN      #1{\unskip}     \fi
\ifx \shownote     \undefined \def \shownote      #1{#1}          \fi
\ifx \showarticletitle \undefined \def \showarticletitle #1{#1}   \fi
\ifx \showURL      \undefined \def \showURL       {\relax}        \fi
\providecommand\bibfield[2]{#2}
\providecommand\bibinfo[2]{#2}
\providecommand\natexlab[1]{#1}
\providecommand\showeprint[2][]{arXiv:#2}

\bibitem[\protect\citeauthoryear{Ballard, Carson, Demmel, Hoemmen, Knight, and
  Schwartz}{Ballard et~al\mbox{.}}{2014}]%
        {bcdhks14}
\bibfield{author}{\bibinfo{person}{Gray Ballard}, \bibinfo{person}{Erin
  Carson}, \bibinfo{person}{James Demmel}, \bibinfo{person}{Mark Hoemmen},
  \bibinfo{person}{Nicholas Knight}, {and} \bibinfo{person}{Oded Schwartz}.}
  \bibinfo{year}{2014}\natexlab{}.
\newblock \showarticletitle{Communication lower bounds and optimal algorithms
  for numerical linear algebra}.
\newblock \bibinfo{journal}{\emph{Acta Numerica}}  \bibinfo{volume}{23}
  (\bibinfo{year}{2014}), \bibinfo{pages}{1–155}.
\newblock


\bibitem[\protect\citeauthoryear{Christ, Demmel, Knight, Scanlon, and
  Yelick}{Christ et~al\mbox{.}}{2013}]%
        {cdksy13}
\bibfield{author}{\bibinfo{person}{Michael Christ}, \bibinfo{person}{James
  Demmel}, \bibinfo{person}{Nicholas Knight}, \bibinfo{person}{Thomas Scanlon},
  {and} \bibinfo{person}{Katherine Yelick}.} \bibinfo{year}{2013}\natexlab{}.
\newblock \bibinfo{title}{Communication lower bounds and optimal algorithms for
  programs that reference arrays -- Part 1}.
\newblock
\newblock
\showeprint[arxiv]{1308.0068}~[math.CA]


\bibitem[\protect\citeauthoryear{Christ, Demmel, Knight, Scanlon, and
  Yelick}{Christ et~al\mbox{.}}{2015}]%
        {cdksy15}
\bibfield{author}{\bibinfo{person}{Michael Christ}, \bibinfo{person}{James
  Demmel}, \bibinfo{person}{Nicholas Knight}, \bibinfo{person}{Thomas Scanlon},
  {and} \bibinfo{person}{Katherine Yelick}.} \bibinfo{year}{2015}\natexlab{}.
\newblock \bibinfo{title}{On Holder-Brascamp-Lieb inequalities for torsion-free
  discrete Abelian groups}.
\newblock
\newblock
\showeprint[arxiv]{1510.04190}~[math.CA]


\bibitem[\protect\citeauthoryear{Council}{Council}{2011}]%
        {NAP12980}
\bibfield{author}{\bibinfo{person}{National~Research Council}.}
  \bibinfo{year}{2011}\natexlab{}.
\newblock \bibinfo{booktitle}{\emph{The Future of Computing Performance: Game
  Over or Next Level?}}
\newblock \bibinfo{publisher}{The National Academies Press},
  \bibinfo{address}{Washington, DC}.
\newblock
\showISBNx{978-0-309-15951-7}


\bibitem[\protect\citeauthoryear{Demmel, Eliahu, Fox, Kamil, Lipshitz,
  Schwartz, and Spillinger}{Demmel et~al\mbox{.}}{2013}]%
        {de13}
\bibfield{author}{\bibinfo{person}{James Demmel}, \bibinfo{person}{David
  Eliahu}, \bibinfo{person}{Armando Fox}, \bibinfo{person}{Shoaib Kamil},
  \bibinfo{person}{Benjamin Lipshitz}, \bibinfo{person}{Oded Schwartz}, {and}
  \bibinfo{person}{Omer Spillinger}.} \bibinfo{year}{2013}\natexlab{}.
\newblock \showarticletitle{Communication-Optimal Parallel Recursive
  Rectangular Matrix Multiplication}. In \bibinfo{booktitle}{\emph{2013 IEEE
  27th International Symposium on Parallel and Distributed Processing}}.
  \bibinfo{pages}{261--272}.
\newblock


\bibitem[\protect\citeauthoryear{Demmel and Rusciano}{Demmel and
  Rusciano}{2016}]%
        {dr16}
\bibfield{author}{\bibinfo{person}{James Demmel} {and} \bibinfo{person}{Alex
  Rusciano}.} \bibinfo{year}{2016}\natexlab{}.
\newblock \bibinfo{title}{Parallelepipeds obtaining HBL lower bounds}.
\newblock
\newblock
\showeprint[arxiv]{1611.05944}~[cs.DS]


\bibitem[\protect\citeauthoryear{Elango}{Elango}{2016}]%
        {elango16}
\bibfield{author}{\bibinfo{person}{Venmugil Elango}.}
  \bibinfo{year}{2016}\natexlab{}.
\newblock \showarticletitle{Techniques for Characterizing the Data Movement
  Complexity of Computations}.
\newblock


\bibitem[\protect\citeauthoryear{Genc, Kim, Amid, Haj-Ali, Iyer, Prakash, Zhao,
  Grubb, Liew, Mao, Ou, Schmidt, Steffl, Wright, Stoica, Ragan-Kelley,
  Asanovic, Nikolic, and Shao}{Genc et~al\mbox{.}}{2021}]%
        {gemmini}
\bibfield{author}{\bibinfo{person}{Hasan Genc}, \bibinfo{person}{Seah Kim},
  \bibinfo{person}{Alon Amid}, \bibinfo{person}{Ameer Haj-Ali},
  \bibinfo{person}{Vighnesh Iyer}, \bibinfo{person}{Pranav Prakash},
  \bibinfo{person}{Jerry Zhao}, \bibinfo{person}{Daniel Grubb},
  \bibinfo{person}{Harrison Liew}, \bibinfo{person}{Howard Mao},
  \bibinfo{person}{Albert Ou}, \bibinfo{person}{Colin Schmidt},
  \bibinfo{person}{Samuel Steffl}, \bibinfo{person}{John Wright},
  \bibinfo{person}{Ion Stoica}, \bibinfo{person}{Jonathan Ragan-Kelley},
  \bibinfo{person}{Krste Asanovic}, \bibinfo{person}{Borivoje Nikolic}, {and}
  \bibinfo{person}{Yakun~Sophia Shao}.} \bibinfo{year}{2021}\natexlab{}.
\newblock \showarticletitle{Gemmini: Enabling Systematic Deep-Learning
  Architecture Evaluation via Full-Stack Integration}. In
  \bibinfo{booktitle}{\emph{Proceedings of the 58th Annual Design Automation
  Conference (DAC)}}.
\newblock


\bibitem[\protect\citeauthoryear{He, Zhang, Ren, and Sun}{He
  et~al\mbox{.}}{2016}]%
        {resnet}
\bibfield{author}{\bibinfo{person}{Kaiming He}, \bibinfo{person}{Xiangyu
  Zhang}, \bibinfo{person}{Shaoqing Ren}, {and} \bibinfo{person}{Jian Sun}.}
  \bibinfo{year}{2016}\natexlab{}.
\newblock \showarticletitle{Deep residual learning for image recognition}. In
  \bibinfo{booktitle}{\emph{Proceedings of the IEEE conference on computer
  vision and pattern recognition}}. \bibinfo{pages}{770--778}.
\newblock


\bibitem[\protect\citeauthoryear{Huang, Kalaiah, Kang, Demmel, Dinh, Wawrzynek,
  Norell, and Shao}{Huang et~al\mbox{.}}{2021}]%
        {huang2021cosa}
\bibfield{author}{\bibinfo{person}{Qijing Huang}, \bibinfo{person}{Aravind
  Kalaiah}, \bibinfo{person}{Minwoo Kang}, \bibinfo{person}{James Demmel},
  \bibinfo{person}{Grace Dinh}, \bibinfo{person}{John Wawrzynek},
  \bibinfo{person}{Thomas Norell}, {and} \bibinfo{person}{Yakun~Sophia Shao}.}
  \bibinfo{year}{2021}\natexlab{}.
\newblock \showarticletitle{CoSA: Scheduling by Constrained Optimization for
  Spatial Accelerators}. In \bibinfo{booktitle}{\emph{2021 ACM/IEEE 48th Annual
  International Symposium on Computer Architecture (ISCA)}}. IEEE,
  \bibinfo{pages}{554--566}.
\newblock


\bibitem[\protect\citeauthoryear{Karandikar, Mao, Kim, Biancolin, Amid, Lee,
  Pemberton, Amaro, Schmidt, Chopra, Huang, Kovacs, Nikolic, Katz, Bachrach,
  and Asanovi\'{c}}{Karandikar et~al\mbox{.}}{2018}]%
        {firesim}
\bibfield{author}{\bibinfo{person}{Sagar Karandikar}, \bibinfo{person}{Howard
  Mao}, \bibinfo{person}{Donggyu Kim}, \bibinfo{person}{David Biancolin},
  \bibinfo{person}{Alon Amid}, \bibinfo{person}{Dayeol Lee},
  \bibinfo{person}{Nathan Pemberton}, \bibinfo{person}{Emmanuel Amaro},
  \bibinfo{person}{Colin Schmidt}, \bibinfo{person}{Aditya Chopra},
  \bibinfo{person}{Qijing Huang}, \bibinfo{person}{Kyle Kovacs},
  \bibinfo{person}{Borivoje Nikolic}, \bibinfo{person}{Randy Katz},
  \bibinfo{person}{Jonathan Bachrach}, {and} \bibinfo{person}{Krste
  Asanovi\'{c}}.} \bibinfo{year}{2018}\natexlab{}.
\newblock \showarticletitle{{FireSim}: {FPGA}-accelerated Cycle-exact Scale-out
  System Simulation in the Public Cloud}. In
  \bibinfo{booktitle}{\emph{Proceedings of the 45th Annual International
  Symposium on Computer Architecture}} (Los Angeles, California)
  \emph{(\bibinfo{series}{ISCA '18})}. \bibinfo{publisher}{IEEE Press},
  \bibinfo{address}{Piscataway, NJ, USA}, \bibinfo{pages}{29--42}.
\newblock


\bibitem[\protect\citeauthoryear{Kwasniewski, Kabi\'{c}, Besta, VandeVondele,
  Solc\`{a}, and Hoefler}{Kwasniewski et~al\mbox{.}}{2019}]%
        {2019rbpg}
\bibfield{author}{\bibinfo{person}{Grzegorz Kwasniewski},
  \bibinfo{person}{Marko Kabi\'{c}}, \bibinfo{person}{Maciej Besta},
  \bibinfo{person}{Joost VandeVondele}, \bibinfo{person}{Raffaele Solc\`{a}},
  {and} \bibinfo{person}{Torsten Hoefler}.} \bibinfo{year}{2019}\natexlab{}.
\newblock \showarticletitle{Red-Blue Pebbling Revisited: Near Optimal Parallel
  Matrix-Matrix Multiplication}. In \bibinfo{booktitle}{\emph{Proceedings of
  the International Conference for High Performance Computing, Networking,
  Storage and Analysis}} (Denver, Colorado) \emph{(\bibinfo{series}{SC '19})}.
  \bibinfo{publisher}{Association for Computing Machinery},
  \bibinfo{address}{New York, NY, USA}, Article \bibinfo{articleno}{24},
  \bibinfo{numpages}{22}~pages.
\newblock
\showISBNx{9781450362290}


\bibitem[\protect\citeauthoryear{Meng and Brothers}{Meng and Brothers}{2019}]%
        {meng2019efficient}
\bibfield{author}{\bibinfo{person}{Lingchuan Meng} {and} \bibinfo{person}{John
  Brothers}.} \bibinfo{year}{2019}\natexlab{}.
\newblock \bibinfo{title}{Efficient Winograd Convolution via Integer
  Arithmetic}.
\newblock
\newblock
\showeprint[arxiv]{1901.01965}~[cs.NE]


\bibitem[\protect\citeauthoryear{San~Juan, Castelló, Dolz, Alonso-Jordá, and
  Quintana-Ortí}{San~Juan et~al\mbox{.}}{2020}]%
        {juan2020high}
\bibfield{author}{\bibinfo{person}{Pablo San~Juan}, \bibinfo{person}{Adrián
  Castelló}, \bibinfo{person}{Manuel~F. Dolz}, \bibinfo{person}{Pedro
  Alonso-Jordá}, {and} \bibinfo{person}{Enrique~S. Quintana-Ortí}.}
  \bibinfo{year}{2020}\natexlab{}.
\newblock \showarticletitle{High Performance and Portable Convolution Operators
  for Multicore Processors}. In \bibinfo{booktitle}{\emph{2020 IEEE 32nd
  International Symposium on Computer Architecture and High Performance
  Computing (SBAC-PAD)}}. \bibinfo{pages}{91--98}.
\newblock


\bibitem[\protect\citeauthoryear{Valdimarsson}{Valdimarsson}{2010}]%
        {val10}
\bibfield{author}{\bibinfo{person}{Stefán~Ingi Valdimarsson}.}
  \bibinfo{year}{2010}\natexlab{}.
\newblock \showarticletitle{The Brascamp–Lieb Polyhedron}.
\newblock \bibinfo{journal}{\emph{Canadian Journal of Mathematics}}
  \bibinfo{volume}{62}, \bibinfo{number}{4} (\bibinfo{year}{2010}),
  \bibinfo{pages}{870–888}.
\newblock


\bibitem[\protect\citeauthoryear{Yang, Gao, Liu, Setter, Pu, Nayak, Bell, Cao,
  Ha, Raina, Kozyrakis, and Horowitz}{Yang et~al\mbox{.}}{2020}]%
        {YGL+20interstellar}
\bibfield{author}{\bibinfo{person}{Xuan Yang}, \bibinfo{person}{Mingyu Gao},
  \bibinfo{person}{Qiaoyi Liu}, \bibinfo{person}{Jeff Setter},
  \bibinfo{person}{Jing Pu}, \bibinfo{person}{Ankita Nayak},
  \bibinfo{person}{Steven Bell}, \bibinfo{person}{Kaidi Cao},
  \bibinfo{person}{Heonjae Ha}, \bibinfo{person}{Priyanka Raina},
  \bibinfo{person}{Christos Kozyrakis}, {and} \bibinfo{person}{Mark Horowitz}.}
  \bibinfo{year}{2020}\natexlab{}.
\newblock \showarticletitle{Interstellar: Using Halide's Scheduling Language to
  Analyze DNN Accelerators}. In \bibinfo{booktitle}{\emph{Proceedings of the
  Twenty-Fifth International Conference on Architectural Support for
  Programming Languages and Operating Systems}} (Lausanne, Switzerland)
  \emph{(\bibinfo{series}{ASPLOS '20})}. \bibinfo{publisher}{Association for
  Computing Machinery}, \bibinfo{address}{New York, NY, USA},
  \bibinfo{pages}{369–383}.
\newblock
\showISBNx{9781450371025}


\bibitem[\protect\citeauthoryear{Zlateski, Jia, Li, and Durand}{Zlateski
  et~al\mbox{.}}{2019}]%
        {zjld19}
\bibfield{author}{\bibinfo{person}{Aleksandar Zlateski}, \bibinfo{person}{Zhen
  Jia}, \bibinfo{person}{Kai Li}, {and} \bibinfo{person}{Fredo Durand}.}
  \bibinfo{year}{2019}\natexlab{}.
\newblock \showarticletitle{The Anatomy of Efficient FFT and Winograd
  Convolutions on Modern CPUs} \emph{(\bibinfo{series}{ICS '19})}.
  \bibinfo{publisher}{Association for Computing Machinery},
  \bibinfo{address}{New York, NY, USA}, \bibinfo{pages}{414–424}.
\newblock
\showISBNx{9781450360791}


\end{thebibliography}




\end{document}